\newcommand{\etaKL}{\eta_{\rm KL}}
\newcommand{\etaTV}{\eta_{\rm TV}}
\def\dperp{\perp\!\!\!\perp}
\definecolor{light-gray}{gray}{.92}
\newtheorem{thm}{Theorem}
\newtheorem{lem}{Lemma}
\newtheorem{prop}{Proposition}
\newtheorem{cor}{Corollary}
\theoremstyle{definition}
\newtheorem{defn}{Definition}
\newtheorem{remark}{Remark}
\newcommand{\calX}{\mathcal{X}}
\newcommand{\calD}{\mathcal{D}}
\newcommand{\calY}{\mathcal{Y}}
\newcommand{\calL}{\mathcal{L}}
\newcommand{\calF}{\mathcal{F}}
\newcommand{\calG}{\mathcal{G}}
\newcommand{\calP}{\mathcal{P}}
\newcommand{\calN}{\mathcal{N}}
\newcommand{\floor}[1]{\lfloor #1 \rfloor}
\newcommand{\reals}{\mathbb{R}}
\newcommand{\integers}{\mathbb{Z}}
\newcommand{\Expect}{\mathbb{E}}
\newcommand{\prob}[1]{\mathbb{P}\left[#1\right]}
\newcommand{\TV}{d_{\rm TV}}
\newcommand{\diff}{{\rm d}}
\newcommand{\eg}{e.g.\xspace}
\newcommand{\ie}{i.e.\xspace}
\renewcommand{\tilde}{\widetilde}
\newcommand{\Xh}{\hat{X}}
\newcommand{\xhw}{\hat{x}_w}
\newcommand{\Reals}{\mathbb{R}}
\newcommand{\ones}{\mathbf{1}}
\newcommand{\defined}{\triangleq}
\newcommand{\ExpVal}[2]{\mathbb{E}_{#1}\left[ #2 \right]}
\newcommand{\EE}[1]{\ExpVal{}{#1}}
\newcommand{\mmse}{\mathsf{mmse}}
\newcommand{\lse}{\mathsf{lmmse}}
\newcommand{\dks}{d_{\mathrm{KS}}}
\newcommand{\Xt}{\widetilde{X}}
\newcommand{\etakl}{\eta_{\mathrm{KL}}}
\newcommand{\etabb}{\bar{\eta}}
\newcommand\independent{\protect\mathpalette{\protect\independenT}{\perp}}
\def\independenT#1#2{\mathrel{\rlap{$#1#2$}\mkern2mu{#1#2}}}
\newcommand{\pth}[1]{\left( #1 \right)}
\newcommand{\fracd}[2]{\frac{\diff #1}{\diff #2}}
\renewcommand{\Xt}{\tilde{X}}
\newcommand{\tX}{\tilde{X}}
\newcommand{\diverge}{\to\infty}
\renewcommand{\Pr}{\mathbb{P}}
\newcommand{\cf}{\varphi}
\long\def\apxonly#1{}
\title{
Strong Data Processing Inequalities for Input Constrained Additive Noise Channels
}
\author{Flavio P.~Calmon \and Yury Polyanskiy \and Yihong Wu
    \thanks{F.~P.~Calmon is with the Harvard John A.~Paulson School of Engineering and Applied Sciences. E-mail:
    \texttt{flavio@seas.harvard.edu}. Y.~Polyanskiy is with the Department of Electrical
  Engineering and Computer Science, MIT, Cambridge, MA, 02139, USA.
  E-mail: \texttt{yp@mit.edu}. Y.~Wu is with the Department of Statistics and Data Science,
 Yale University, New Haven, CT, 06511, USA. E-mail: \texttt{yihong.wu@yale.edu}.
 This work is supported  in part by the National Science Foundation (NSF) CAREER awards under Grant CCF-12-53205 and CCF-1651588, the NSF Grant IIS-1447879, CCF-1423088, CCF-1527105,  and by the Center for Science of Information (CSoI), an NSF Science and Technology Center, under Grant CCF-09-39370. This paper was presented in part at the 2015 IEEE International Symposium on Information Theory.
}  
}
\begin{document}
\maketitle

\begin{abstract}
This paper quantifies the intuitive observation that adding noise reduces available information by means of non-linear
strong data processing inequalities. Consider the random variables $W\to X\to Y$ forming a Markov chain, where $Y=X+Z$
with $X$ and $Z$ real-valued, independent and $X$ bounded in $L_p$-norm. 
It is shown that $I(W;Y) \le F_I(I(W;X))$ with $F_I(t)<t$ whenever $t>0$, if and only if $Z$ has a density whose support is not disjoint from any translate of itself.

A related question is to characterize for what couplings $(W,X)$ the mutual information $I(W;Y)$ is close to maximum
possible.  To that end we show that in order to saturate the channel, i.e. for $I(W;Y)$ to approach capacity, it is
mandatory that $I(W;X)\to\infty$ (under suitable conditions on the channel). A key ingredient for this result is a
deconvolution lemma which shows that post-convolution total variation distance bounds the pre-convolution
Kolmogorov-Smirnov distance. 

Explicit bounds are provided for the special case of the additive Gaussian noise channel with quadratic cost constraint.
These bounds are shown to be order-optimal. For this case simplified proofs are provided
leveraging Gaussian-specific tools such as the connection between information and estimation (I-MMSE) and Talagrand's information-transportation inequality.
\end{abstract}

\newpage
\tableofcontents


\section{Introduction}

Strong data-processing inequalities (SDPIs) quantify the decrease of mutual information under the action of a noisy channel.
Such inequalities have apparently been first discovered by Ahlswede and G\'acs in a landmark
paper~\cite{ahlswede1976spreading}. Among the work predating~\cite{ahlswede1976spreading} and extending it we
mention~\cite{dobrushin_central_1956,sarmanov1962maximum,CIR93}.
Notable connections include topics
ranging from existence and uniqueness of Gibbs measures and log-Sobolev inequalities to performance limits of noisy
circuits. We refer the reader to the introduction in~\cite{polyanskiy_dissipation_2014} and the recent
monographs~\cite{raginsky_strong_2014,raginsky2013concentration} for more detailed discussions of applications and extensions.

For a fixed channel $P_{Y|X}:\calX\to \calY$, let $P_{Y|X}\circ P$ be
the distribution on $\calY$ induced by the push-forward of the distribution $P$.
One approach to strong data processing seeks to find the contraction coefficients 
\begin{equation}
  \label{eq:eta}
    \eta_f \defined
    \sup_{P,Q:P\neq Q}\frac{D_f\left(P_{Y|X}\circ P\|P_{Y|X}\circ
    Q\right)}{D_f(P\|Q)}\,,
\end{equation}
where the $D_f(P\|Q) \triangleq \Expect_Q[f(\frac{dP}{dQ})]$ is an $f$-divergence of Csisz\'ar~\cite{IC67}.
When the divergence $D_f$ is the KL-divergence and total
variation,\footnote{The total variation between two distributions $P$ and $Q$ is
$\TV(P,Q)\defined \sup_E|P[E]-Q[E]|$.} we denote the
coefficient $\eta_f$ as $\etakl$ and $\etaTV$, respectively. 

For discrete channels,~\cite{ahlswede1976spreading} showed 
that strict contraction for KL-divergence is equivalent to strict contraction in terms of total variation ($\etakl<1\Leftrightarrow \etaTV<1$), and $\etakl<1$ if an only if the bipartite graph describing the channel, determined by the edges $$\left\{(x,y)\mid P_{Y|X}(y|x)>0 \right\},$$ is connected.
 Having $\etakl<1$ implies reduction in the usual data-processing inequality for
mutual information~\cite[Exercise III.2.12]{CK},~\cite{anantharam2013maximal}:
\begin{equation}
\label{eq:LinearDPIs}
\forall\ W\to X\to Y: I(W;Y) \le \etakl \cdot I(W;X)\,.
\end{equation}
We refer to inequalities of the form \eqref{eq:LinearDPIs} as \textit{linear} SDPIs.

When $P_{Y|X}$ is an additive white Gaussian noise channel, i.e. $Y=X+Z$ with $Z\sim \calN(0,1)$,
it has been shown~\cite{polyanskiy_dissipation_2014} that restricting the maximization in~\eqref{eq:eta} to
distributions with a bounded second moment (or any moment) still leads to no-contraction, giving $\etakl = \etaTV=1$ for
AWGN. Nevertheless, the contraction does indeed take place, except not multiplicatively.
The region \[\left\{\left(\TV(P,Q
  ),\TV(P*P_Z,Q*P_Z)\right):\mathbb{E}_{(P+Q)/2}[X^2] \leq \gamma \right\}\,, \]
has been explicitly determined in~\cite{polyanskiy_dissipation_2014},  where $*$ denotes convolution. The boundary of this region, dubbed the \textit{Dobrushin curve} of the channel, turned out to be
strictly bounded away from the diagonal (identity). In other words, except for the trivial case where $\TV(P,Q)=0$, total variation
decreases by a non-trivial amount in Gaussian channels. 

Unfortunately, the similar region for KL-divergence turns out to be trivial, so that no improvement in the inequality
$$ D(P_X*P_Z \| Q_Z*P_Z) \le D(P_X\|Q_X) $$
is possible (given the knowledge of the right-hand side and moment constraints on $P_X$ and $Q_X$).
In~\cite{polyanskiy_dissipation_2014}, in order to study how mutual information dissipates on a chain of Gaussian links, this problem was resolved by a rather lengthy workaround which entails first reducing 
questions regarding the mutual information to those about the total variation and then converting back. 

A more direct approach, in the spirit of the joint-range idea of Harremo\"es and Vajda~\cite{harremoes2011pairs}, is to find (or bound) the \emph{best possible data-processing function} $F_I$ defined as follows.
\begin{defn}
  \label{def:FI}
  For a fixed channel $P_{Y|X}$ and a convex set $\calP$ of distributions on $\calX$ we define
  \begin{equation}
        F_I(t,P_{Y|X},\calP)\defined \sup \left\{ I(W;Y)\colon I(W;X)\leq t,
    W\rightarrow X\rightarrow Y, P_X \in \calP
  \right\},
  \end{equation}
where the supremum is over all joint distributions $P_{W,X}$ with $P_X\in\calP$. When the channel is clear from the context, we abbreviate $F_I(t,P_{Y|X})$ as $F_I(t)$.
\end{defn}

For brevity we denote $F_I(t,\gamma)$ the function corresponding to the special case of the AWGN channel and quadratic
constraint. Namely, for $Y_\gamma=\sqrt{\gamma}X+Z$, where $Z\sim \calN(0,1)$ is independent of $X$, we define
\begin{equation}\label{eq:figamma}
  F_I(t,\gamma)\defined \sup \left\{ I(W;Y_\gamma)\colon I(W;X)\leq t,
    W\rightarrow X\rightarrow Y_\gamma, \EE{X^2} \le 1
  \right\}.
\end{equation}

The significance of the function $F_I$ is that it gives the optimal input-independent strong data processing
inequalities. It is instructive to compare definition of $F_I$ with two related quantities considered previously in the
literature.  Witsenhausen and Wyner~\cite{WW75} defined
\begin{equation}
  F_T(P_{XY}, h) = \inf H(Y|W), 
  \label{eq:FH}
\end{equation}
with the infimum taken over all joint distributions satisfying
$$ W\to X\to Y, H(X|W)=h, \Pr[X=x, Y=y] = P_{XY}(x,y)\,.$$
Clearly, by a simple reparametrization $h=H(X)-t$, this function would correspond to $H(Y)-F_I(t)$ if $F_I(t)$ were defined
with restriction to a given input distribution $P_X$. The $P_X$-independent version of \prettyref{eq:FH} has also been studied by
Witsenhausen~\cite{HW74}:
$$ f_T(P_{Y|X}, h) = \inf H(Y|W), $$
with the infimum taken over all
$$ W\to X \to Y, H(X|W)=h, \Pr[Y=y|X=x] = P_{Y|X}(y|x)\,.$$
This quantity plays a role in a generalization of Mrs. Gerber's lemma and satisfies a convenient tensorization
property:
$$ f_T((P_{Y|X})^n, nh) = n f_T(P_{Y|X}, h)\,. $$
There is no one-to-one correspondence between $f_T(P_{Y|X}, h)$ and $F_I(t)$ and in fact, alas, $F_I(t)$ does not
satisfy any (known to us) tensorization property.

\subsection{Overview of results}
A priori, the only bounds we can state on $F_I$ are consequences of capacity and the data processing inequality:
\begin{equation}
  F_I(t,P_{Y|X})\leq \min\left\{ t,C(P_{Y|X}, \calP) \right\},
  \label{eq:trivial}
\end{equation}
where $C(P_{Y|X},\calP)\defined \sup_{P_X\in\calP} I(X;Y)$. For the Gaussian-quadratic case, 
\begin{equation}
  F_I(t,\gamma)\leq \min\left\{ t,C(\gamma) \right\},
\end{equation}
where $C(\gamma)\defined\frac{1}{2}\log(1+\gamma)$ is the Gaussian channel capacity\footnote{All logarithms are in base $e$.}.

In this work we show that generally the trivial bound~\eqref{eq:trivial} is not tight at any point. Namely, we prove
that 
\begin{align} F_I(t) &\le t - g_d(t), \label{eq:gd} \\
  F_I(t) &\le C - g_h(t) \label{eq:gh}
\end{align}
and both functions $g_d$ and $g_h$ are strictly positive for all $t>0$. We call these two results \textit{diagonal} and
\textit{horizontal} bounds respectively. See Fig.~\ref{fig:fireg} for an illustration. 

\begin{figure}[tb]
  \centering
  \psfrag{x}[cc]{\small $I(W;X)$}
  \psfrag{y}[cc]{\small $I(W;Y) $ }
  \psfrag{a}[cc][c]{\small $~g_d$ }
  \psfrag{b}[cc][c]{\small $~g_h$ } 
  \psfrag{f}[Bc]{$F_I$ }
  \includegraphics[width=0.6\textwidth]{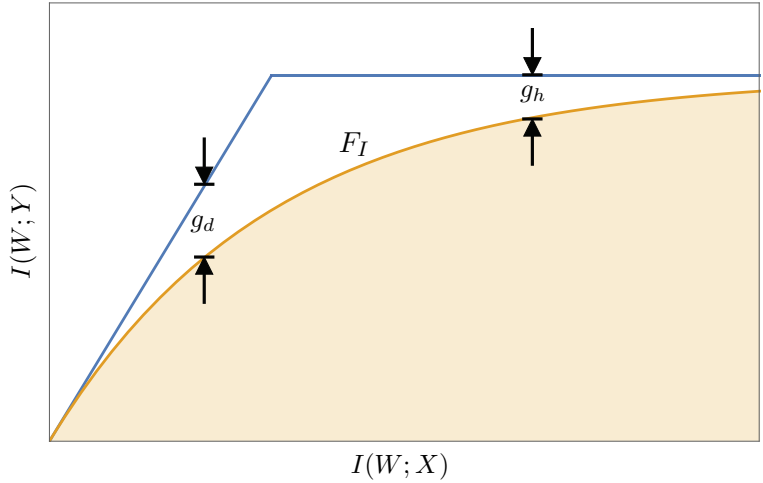}
  \vspace{-0.05in}
\caption{The strong data processing function $F_I$ and gaps $g_d$ and $g_h$ to the trivial data processing bound \eqref{eq:trivial}.}
  \label{fig:fireg}
  \vspace{-0.1in}
\end{figure}

For the Gaussian-quadratic case we show explicitly that our estimates are asymptotically
sharp. For example, Theorem \ref{thm:diagonal} (Gaussian diagonal bound) shows the lower-bound portion of
\begin{equation}
    g_d(t,\gamma) =  e^{-\frac{\gamma}{t}\log\frac{1}{t}+ \Theta(\log \frac{1}{t})}.
 \label{eq:FuIneq}
\end{equation}
An application of~\eqref{eq:FuIneq} allows, via a repeated application of~\eqref{eq:gd}, to infer that the mutual information between the input $X_0$ and the output
$Y_n$ of a chain of $n$ energy-constrained Gaussian relays converges to zero $I(X_0; Y_n)\to0$. In
fact,~\eqref{eq:FuIneq} recovers the optimal convergence rate of $\Theta(\frac{\log \log n}{\log n})$ first reported
in~\cite[Theorem 1]{polyanskiy_dissipation_2014}. 

We then generalize the diagonal bound to non-Gaussian noise and arbitrary moment constraint (Theorem
\ref{thm:DiagGeneral}) by an additional quantization argument. 
It is worth noting that mutual information does not always strictly contract. Consider the following simple example: Let $Z$ be uniformly distributed over $[0,1]$ and $W=X$ is Bernoulli, then $I(W;X+Z)=I(W;X)=H(X)$ since $X$ can be decoded perfectly from $X+Z$. Surprisingly, this turns out to be the only situation for non-contraction of mutual information to occur, as the following characterization (\prettyref{cor:strict}) shows:
for strict contraction of mutual information it is \emph{necessary and sufficient}
that the noise $Z$ cannot be perfectly distinguished from a translate of itself (i.e. $\TV(P_Z, P_{Z+x}) \neq 1$).

Going to the horizontal bound, we show (for the Gaussian-quadratic case) that $F_I(t,\gamma)$ approaches
$C(\gamma)$ no faster than double-exponentially in $t$ as $t\rightarrow \infty$. Namely, in Theorem \ref{thm:capBound} and Remark
\ref{remark:cap}, we prove that $g_h(t)$ satisfies
\begin{equation}
  \label{eq:bound_gh}
  e^{-c_1(\gamma)e^{4t}} \leq g_h(t)\leq e^{-c_2(\gamma)e^t+\log 4(1+\gamma)},
\end{equation}
where $c_1(\gamma)$ and $c_2(\gamma)$ are strictly positive functions of $\gamma$.

Generalization of the horizontal bound to arbitrary noise distribution (Theorem~\ref{thm:generalHoriz}) proceeds along a
similar route. In the process, we derive a deconvolution estimate that bounds the Kolmogorov-Smirnov distance ($L_\infty$ norm between CDFs)
in terms of the total variation between convolutions with
noise. Namely, \prettyref{cor:deconv-phiz} shows that
for a noise $Z$ with bounded density and non-vanishing characteristic function we have
$$ \dks(P,Q) \leq f(\TV(P*P_Z, Q*P_Z)) $$
for some continuous increasing function $f(\cdot)$ with $f(0)=0$.

The final result (\prettyref{thm:infinite}) addresses the question of bounding $F_I$-curve for non-scalar channel
$Y=X+Z$. Somewhat surprisingly, we show that for the infinite-dimensional Gaussian case the trivial bound \eqref{eq:trivial} on
the $F_I$-curve is exact.

\subsection{Organization and notation}

The rest of the paper is organized as follows. Section \ref{sec:FIproperties} introduces properties of the $F_I$-curve,
together with a few examples for discrete channels. 

Sections \ref{sec:diagonal} and~\ref{sec:GeneralDiagonal} present a (diagonal) lower bound for
$g_d(t)$ in the Gaussian and generall setting respectively. Section \ref{sec:mmse} shows that any $X$ for which
close-to-optimal (in MMSE sense) linear estimator of $Y=X+Z$ exists, must necessarily be close to Gaussian in the sense
of Kolmogorov-Smirnov distance. These results are then used in Section
\ref{sec:horizontal} to prove a (Gaussian horizontal) lower bound on $g_h(t)$. 

Section \ref{sec:deconvTV} introduces a deconvolution result that
connects KS-distance with TV-divergence. This result is then applied in Section \ref{sec:GeneralHorizontal} to derive a
general horizontal bound for $F_I$ curve for a wide range of additive noise channels. 

Finally, in  Section \ref{sec:infinite} we
consider the infinite-dimensional discrete Gaussian channel, and show that in this case there exists no non-trivial
strong data processing inequality for mutual information. In the appendix, we present a shorter proof of the
key step in the Gaussian horizontal bound (namely, Lemma \ref{lem:capKS}) using Talagrand's inequality
\cite{talagrand_transportation_1996}.

\paragraph{Notations}
For any distribution $P$ on $\reals$, let $F_P(x) = P((-\infty,x])$ denote its cumulative distribution function (CDF). For any random variable $X$, denote its distribution and CDF by $P_X$ and $F_X$, respectively.
For any sequences $\{a_n\}$ and $\{b_n\}$ of positive numbers, we write $a_n\gtrsim b_n$ or $b_n\lesssim a_n$ when $a_n\geq cb_n$ for some absolute constant $c>0$. 

\section{Examples and properties of the $F_I$-curves}

\label{sec:FIproperties}

In this section we discuss properties of the $F_I$-curve, and present a few examples for discrete channels.

\begin{prop}[Properties of the $F_I$-curve]~~~~~~~~~~~~~~~~~~
\begin{enumerate}
	\item $F_I$ is an increasing function such that $0 \leq F_I(t) \leq t$ with $F_I(0)=0$.
	\item $t \mapsto \frac{F_I(t)}{t}$ is decreasing. Consequently, $F_I$ is subadditive and $F_I'(0) = \sup_{t > 0} \frac{F_I(t)}{t}$.
	\item Value of $F_I(t)$ is unchanged if $W$ is restricted to an alphabet of size $|\calX|+1$. 
	Upper concave envelope of $F_I(t)$ equals upper concave envelope of a set of pairs $(I(W;X), I(W;Y))$ achieved
	by restricting $W$ to alphabet $\calX$.
\end{enumerate}
	\label{prop:FIprop}
\end{prop}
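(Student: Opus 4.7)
My plan is to dispatch the three items in turn with elementary arguments. Part 1 is essentially definitional. Monotonicity of $F_I$ holds because enlarging $t$ only enlarges the feasible set of couplings $(W,X)$. The upper bound $F_I(t) \leq t$ is the usual data processing inequality; the lower bound $F_I(t) \geq 0$ is realized by taking $W$ constant; and $F_I(0) = 0$ because $I(W;X) = 0$ is equivalent to $W \perp X$, which composed with the Markov chain $W \to X \to Y$ forces $W \perp Y$ and hence $I(W;Y) = 0$.

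For part 2, I would use a ``thinning'' argument. Given any feasible $W$ with $I(W;X) = t$ and $I(W;Y) = F_I(t)$, introduce an independent Bernoulli$(\lambda)$ variable $B$ and define $\tilde W$ to equal $W$ when $B=1$ and a distinguished constant symbol otherwise. The Markov chain $\tilde W \to X \to Y$ is preserved since $B$ is independent of $(W,X,Y)$, and conditioning on $B$ yields $I(\tilde W; X) = \lambda I(W;X) = \lambda t$ and $I(\tilde W; Y) = \lambda F_I(t)$. Hence $F_I(\lambda t) \geq \lambda F_I(t)$ for $\lambda \in [0,1]$, i.e., $t \mapsto F_I(t)/t$ is non-increasing. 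Subadditivity then follows from $F_I(t_i) \geq (t_i/(t_1 + t_2))\,F_I(t_1 + t_2)$ summed over $i=1,2$, and the identity $F_I'(0) = \sup_{t>0} F_I(t)/t$ is the observation that a non-increasing positive function attains its supremum as $t \to 0^+$.

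For part 3, I would invoke the Fenchel--Eggleston strengthening of Carath\'eodory's theorem. Fix $P_X \in \calP$ with $\calX$ of size $n$, and identify each conditional law $P_{X|W=w}$ with a point $\pi_w \in \Delta(\calX)$. Then $(P_X, I(W;X), I(W;Y))$ is the $P_W$-barycenter of the continuous map
\begin{equation*}
\Phi(\pi) \;=\; \bigl(\pi(1), \ldots, \pi(n-1),\; H(\pi),\; H(\pi * P_Z)\bigr) \;\in\; \mathbb{R}^{n+1}.
\end{equation*}
Since the image $\Phi(\Delta(\calX))$ is path-connected, Fenchel--Eggleston--Carath\'eodory gives that every point in its convex hull is a convex combination of at most $n+1$ of its points, yielding a support for $W$ of size at most $|\calX|+1$; the supremum over $P_X \in \calP$ preserves this bound. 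For the upper concave envelope statement, the value of $I(W;X)$ need not be matched exactly since concavifying in $t$ is a free operation, so the analogous $n$-dimensional map $(\pi(1),\ldots,\pi(n-1), H(\pi * P_Z))$ captures the relevant extreme points and the atom count drops to $|\calX|$. The only mild subtlety is invoking Fenchel--Eggleston (path-connectedness) rather than vanilla Carath\'eodory to achieve the sharp off-by-one counts.
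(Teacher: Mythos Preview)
Your proposal is correct and follows essentially the same strategy as the paper for all three parts. Two minor remarks are worth making. For Part~2, your explicit thinning via an independent Bernoulli and a fresh symbol $*$ is actually cleaner than the paper's formulation: the paper writes $P'_{WX}=\lambda P_{WX}+(1-\lambda)P_WP_X$ on the \emph{same} alphabet and asserts that $(\lambda I(W;X),\lambda I(W;Y))$ is achieved, but the convex mixture only gives $I(W';X')\le \lambda I(W;X)$ by convexity of divergence, not equality; your augmentation with $*$ (equivalently, appending the latent $B$ to $W$) is what makes the identities $I(\tilde W;X)=\lambda I(W;X)$ and $I(\tilde W;Y)=\lambda I(W;Y)$ exact. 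For Part~3, the paper simply cites Witsenhausen--Wyner and the Carath\'eodory appendix in El~Gamal--Kim, whereas you spell out the Fenchel--Eggleston--Carath\'eodory argument directly; these are the same proof under the hood, and your use of connectedness of $\Phi(\Delta(\calX))$ to get the sharp counts $|\calX|+1$ and $|\calX|$ is exactly right.
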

\begin{proof}
The first part follows directly from the definition, the non-negativity and the data processing inequality of mutual information. For the second part, fix $P_{Y|X}$ and let $P_{WX}$ achieve the pair $(I(W;X),I(W;Y))$. Then by choosing $P'_{WX}=\lambda P_{WX} +(1-\lambda)P_WP_X$, the pair $(\lambda I(W;X),\lambda I(W;Y))$ is also achievable. It follows directly that $t \mapsto F_I(t)/t$ is decreasing.

	Claim 3 follows by noticing that for a fixed distribution $P_X$, any pair $(H(X|W), H(Y|W))$ can be attained by
	$W$ with a given restriction on the alphabet, see~\cite[Theorem 2.3]{WW75} or \cite[Appendix C]{el2011network}. Similarly, concave envelope of $F_I(t)$
	can be found by taking convex closure of extremal points $(H(X) - H(X|W), H(Y) - H(Y|W))$, which can be attained
	by $W$ with alphabet $|\calX|$, see paragraph after~\cite[Theorem 2.3]{WW75}.
\end{proof}

\apxonly{
\textbf{TODO: } Add other remarks on computing $F_I$-curve, namely: bound on cardinality of $W$; reduction to uniform $W$
with unbounded cardinality; expression for concavified $F_I$-curve as $I(U_1; Y|U_2)$; convexity of the lower-boundary of
$F_I$-region; $P_{WX}$ at boundary of $F_I$ $\iff$ at boundary of $KL$-divergence.
}

We present next a few examples of the $F_I(t)$-curve for discrete channels:

\begin{enumerate}
\item \textit{Erasure channel} is defined as $P_{Y|X}:\calX\to\calX\cup\{?\}$ with $y=x$ or $?$ with probabilities
$1-\alpha$ and $\alpha$, respectively. In this case we have for any $W-X-Y$ a convenient identity, cf.~\cite{VW08}:
		$$ I(W;Y)=(1-\alpha)I(W;X)\,,$$
	and consequently, the $F_I$-curve is
	\begin{equation}
		F_I(t) = (1-\alpha)\left(t \wedge \log |\cal X| \right)
	\end{equation}
	and is achieved by taking $W=X$.
\item \textit{Binary symmetric channel} BSC($\delta$) is defined as $P_{Y|X}: \{0,1\}\to\{0,1\}$ with $Y=X+Z$, $Z \sim
\mathrm{Ber}(\delta)$. Here the optimal coupling is $X=W+Z'$ with $Z'
\dperp W \sim \mathrm{Ber}(1/2)$ and varying bias of $Z'$. This is formally proved in the next Proposition and illustrated in Fig. .
\end{enumerate}

\begin{figure}[tb]
  \centering
  \psfrag{x}[cc]{$t$}
  \psfrag{y}[cc][b][1][180]{$F_I(t)$ }
  \psfrag{C}[cc][c]{\small $\delta=0.1,0.2,0.3,0.4$ }
  \includegraphics[width=0.6\textwidth]{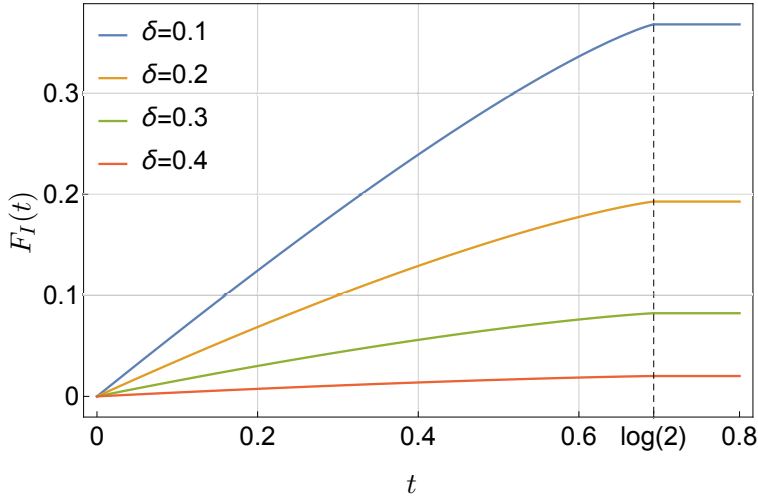}
  \vspace{-0.05in}
\caption{$F_I$-curve \eqref{eq:FIBSC} for $BSC(\delta)$.}
  \label{fig:fireg}
  \vspace{-0.1in}
\end{figure}

\begin{prop}
\label{prop:FIBSC}
The $F_I$-curve of the BSC($\delta$) is given by
\begin{equation}
F_I(t) = \log 2 - h_b\left(\delta*h_b^{-1}(|\log 2-t|^+)\right),
\label{eq:FIBSC}
\end{equation}
where $p * q = p(1-q)+q(1-p)$, $h_b(y)\defined -y\log y-(1-y)\log(1-y)$ is the binary entropy function and $h_b^{-1}:
[0,\log 2] \to [0,{1\over2}]$ is its functional inverse. 
\end{prop}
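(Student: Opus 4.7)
My plan is to match an upper and a lower bound.

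For the achievability direction with $t\leq \log 2$, set $p\defined h_b^{-1}(\log 2 - t)$ and construct $(W,X)$ via $W\sim \mathrm{Ber}(1/2)$, $Z'\sim \mathrm{Ber}(p)$ with $Z'\perp W$, and $X = W\oplus Z'$. Then $X$ is uniform, and $Y = W\oplus Z'\oplus Z$ with $Z'\oplus Z\sim \mathrm{Ber}(\delta*p)$, yielding $I(W;X) = \log 2 - h_b(p) = t$ and $I(W;Y) = \log 2 - h_b(\delta*p)$, which matches the claimed value. For $t\geq \log 2$ I would take $W=X\sim \mathrm{Ber}(1/2)$, giving $I(W;Y) = I(X;Y) = \log 2 - h_b(\delta)$, again matching (with $|\log 2-t|^+=0$).

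For the converse I would invoke Mrs.~Gerber's lemma for BSC$(\delta)$: for any Markov chain $W\to X\to Y = X\oplus Z$ with $Z\sim \mathrm{Ber}(\delta)$,
$$H(Y|W)\geq f(H(X|W)), \qquad f(v)\defined h_b\bigl(\delta*h_b^{-1}(v)\bigr), \quad v\in[0,\log 2].$$
Writing $q=P(X=1)$ and $\mu = I(W;X)$, so that $H(X|W)=h_b(q)-\mu$ and $h_b(\delta*q) = f(h_b(q))$, this yields
$$I(W;Y) = H(Y) - H(Y|W) \leq f(h_b(q)) - f(h_b(q)-\mu).$$

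The key step is that Mrs.~Gerber's lemma is, at its heart, the statement that $f$ is \emph{convex} on $[0,\log 2]$. Since $h_b(q)\leq \log 2$, convexity (equivalently, monotonicity of difference quotients) implies
$$f(h_b(q)) - f(h_b(q)-\mu)\leq f(\log 2) - f(\log 2-\mu) = \log 2 - h_b\bigl(\delta*h_b^{-1}(\log 2-\mu)\bigr).$$
Because $f$ is also increasing on $[0,\log 2]$, the right-hand side is non-decreasing in $\mu$, so setting $\mu = \min(t,\log 2)$ delivers the claimed formula; for $t\geq\log 2$ one obtains $f(\log 2)-f(0)=\log 2-h_b(\delta)$. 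The part most in need of care is the convexity step: without it one might worry that a highly skewed $P_X$ could outperform the uniform choice, and it is precisely the convexity of $f$ that rules such gains out.
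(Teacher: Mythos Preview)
Your proposal is correct and takes essentially the same approach as the paper: both arguments reduce the converse to Mrs.\ Gerber's lemma, using the convexity of $v\mapsto h_b(\delta*h_b^{-1}(v))$ to show that the uniform input $P_X=\mathrm{Ber}(1/2)$ dominates every other choice of $q$. Your write-up is slightly more explicit than the paper's in separating out the achievability construction and the monotonicity step (the paper folds the latter into its asserted formula for $f_I(x,p)$), but the substance is identical.
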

\begin{proof}
\label{sec:BSC_gerber}
First, it is clear that
\begin{equation}\label{eq:bgb1}
F_I(t) = \max_{p\in [h_b^{-1}(t),{1\over2}]} f_I(t,p)\,,
\end{equation}
where 
\begin{align*}
f_I(x,p) &\defined \max \left\{I(W;Y)\colon I(W;X)\leq x,X\sim\mathrm{Ber}(p)\right\}\\
	&=h_b(p*\delta)-h_b\left(\delta*h_b^{-1}(h_b(p)-x)\right),
\end{align*}
that is $f_I(t,p)$ is an $F_I$-curve for a fixed marginal $P_X$.

It is sufficient to prove that $p={1\over2}$ is a maximizer in~\eqref{eq:bgb1} regardless of $t$. To that end,
recall Mrs.\ Gerber's  Lemma \cite{WZ73} states that 
$$ x \mapsto h_b(\delta * h_b^{-1}(x)) $$ 
is convex on $[0,\log 2]$. Consequently for any $0 \leq t \leq u \leq \log 2$,
$f_I(t,h_b^{-1}(u)) = 
h_b(\delta * h_b^{-1}(u)) - h_b(\delta * h_b^{-1}(u-t)) \leq h_b(\delta * h_b^{-1}(\log 2)) - h_b(\delta * h_b^{-1}(\log 2 - t)) = f_I(t,1/2)$.
\end{proof}

\section{Diagonal bound for Gaussian channels}
\label{sec:diagonal}
We now study properties of the $F_I$-curve in the Gaussian case, i.e. $P_Z=\calN(0,1).$ In this section, we show that $F_I(t,\gamma)$ is bounded away from $t$ for all $t>0$ (Theorem \ref{thm:diagonal}) and investigate the behavior of $F_I(t,\gamma)$ for small $t$ (Corollary \ref{cor:diagonalRate}). The proofs of the  non-linear SDPIs presented in both the current and the next section hinge on the existence of a linear SDPI when the input $X$ is amplitude-constrained. We define
\begin{equation}
    \label{eq:EtaDefn}
    \eta(A)\defined \sup_{P,Q~\mathrm{on}~[-A,A]}
    \frac{D(P*P_Z\|Q*P_Z)}{D(P\|Q)}.
\end{equation}
Similarly, define the Dobrushin's coefficient $\etaTV(A)$ with $D$ replaced by $\TV$ in \prettyref{eq:EtaDefn}, that is,
\begin{equation}
\etaTV(A) = 
\sup_{z,z' \in [-A,A]} \TV(P_{Z+z},P_{Z+z'}) = \sup_{|\delta|\leq 2A} \theta(\delta),
	\label{eq:etaTVA}
\end{equation}
where 
\begin{equation}
	\theta(\delta) \triangleq \TV(P_{Z},P_{Z+\delta}).
	\label{eq:theta}
\end{equation}

Observe that for any $W\to X\to Y$, where $Y=X+Z$ and $X\in [-A,A]$ almost surely, we have $I(W;Y)\leq \eta(A)I(W;X)$. In the Gaussian case considered in this section, $\eta(A)$ can be upper-bounded as  \cite{polyanskiy_dissipation_2014}
\begin{equation}
  \label{eq:etatv}
  \eta(A)\leq \etaTV(A)=\theta(A)=1-2Q(A),
\end{equation}
where $Q(x) \triangleq \int_x^\infty {1\over \sqrt{2\pi}} e^{-t^2/2} dt$ is the Gaussian complimentary CDF.
This leads to the following general lemma, which also holds for general $P_Z$.

\begin{lem}
  \label{lem:DiagonalLem}
  Let $W\to X\to Y$, where $Y=X+Z$. For any $A>0$, let $\epsilon \defined \Pr\left[|X|>A\right]$. Then
  \begin{equation}
    I(W;Y)\leq I(W;X)-\etabb(A)\left(I(W;X)-h_b(\epsilon)-\epsilon I(W;Y|E=1) \right),
  \end{equation}
  where $E\defined \ones_{\{|X|\geq A\}}$, $h_b(x)\defined x\log \frac{1}{x}+(1-x)\log \frac{1}{1-x}$ and $\etabb(A)\defined 1-\eta(A)$.
\end{lem}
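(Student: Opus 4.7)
\textbf{Proof plan for Lemma \ref{lem:DiagonalLem}.}

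The plan is to decompose the mutual informations based on the indicator $E=\ones_{\{|X|\ge A\}}$ so that the linear SDPI \eqref{eq:EtaDefn} can be applied on the event $\{|X|\le A\}$, and then carefully recombine the pieces.

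First I would note that since $E$ is a deterministic function of $X$ and $Z \independent X$, conditioning on $\{E=e\}$ preserves the additive-noise Markov structure $W\to X \to Y = X+Z$ and preserves the marginal law of $Z$. On $\{E=0\}$, the input $X$ is supported in $[-A,A]$, so by definition of $\eta(A)$ and the standard derivation of \eqref{eq:LinearDPIs} from contraction of $f$-divergence, the conditional mutual information obeys
\begin{equation*}
I(W;Y\mid E=0) \;\le\; \eta(A)\, I(W;X\mid E=0).
\end{equation*}
Next I would write $I(W;Y)\le I(W;Y,E) = I(W;E) + I(W;Y\mid E)$, and similarly $I(W;X) = I(W;E) + I(W;X\mid E)$ since $E$ is a function of $X$. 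Expanding $I(W;\cdot\mid E)$ over $E\in\{0,1\}$ and applying the bound above on $E=0$ gives
\begin{equation*}
I(W;Y) \;\le\; I(W;E) + (1-\epsilon)\eta(A)\,I(W;X\mid E=0) + \epsilon\, I(W;Y\mid E=1).
\end{equation*}

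The second step is the recombination, which is the only slightly delicate point. Subtracting the analogous identity for $I(W;X)$ yields
\begin{equation*}
I(W;X)-I(W;Y)\;\ge\;\etabb(A)(1-\epsilon)\,I(W;X\mid E=0) + \epsilon\bigl(I(W;X\mid E=1)-I(W;Y\mid E=1)\bigr).
\end{equation*}
Here the key observation is that, by data processing applied conditionally on $\{E=1\}$, the bracketed quantity $I(W;X\mid E=1)-I(W;Y\mid E=1)$ is nonnegative. Since also $\etabb(A)\in[0,1]$, I can replace the leading coefficient $1$ on that term by $\etabb(A)$ without invalidating the inequality:
\begin{equation*}
I(W;X)-I(W;Y) \;\ge\; \etabb(A)\bigl[(1-\epsilon)\,I(W;X\mid E=0) + \epsilon\, I(W;X\mid E=1) - \epsilon\, I(W;Y\mid E=1)\bigr].
\end{equation*}
The bracket equals $I(W;X\mid E) - \epsilon\, I(W;Y\mid E=1) = I(W;X) - I(W;E) - \epsilon\, I(W;Y\mid E=1)$.

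Finally, using $I(W;E)\le H(E)=h_b(\epsilon)$ (this step absorbs $W$'s dependence on the rare event $E=1$ into the binary entropy term) yields
\begin{equation*}
I(W;X)-I(W;Y)\;\ge\;\etabb(A)\bigl[I(W;X)-h_b(\epsilon)-\epsilon\,I(W;Y\mid E=1)\bigr],
\end{equation*}
which rearranges to the claim. The only substantive step is the cleverness in upper-bounding the positive quantity $\epsilon(I(W;X|E=1)-I(W;Y|E=1))$ from below by $\etabb(A)$ times itself so that $\etabb(A)$ factors out cleanly; everything else is routine chain-rule bookkeeping plus the linear SDPI on $[-A,A]$.
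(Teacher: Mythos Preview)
Your proposal is correct and follows essentially the same approach as the paper: decompose via $E$, apply the amplitude-constrained linear SDPI on $\{E=0\}$, use conditional data processing on $\{E=1\}$, and bound $I(W;E)\le h_b(\epsilon)$. The only difference is organizational: you subtract the chain-rule expansions for $I(W;X)$ and $I(W;Y)$ and then lower-bound the nonnegative term $\epsilon(I(W;X\mid E{=}1)-I(W;Y\mid E{=}1))$ by $\etabb(A)$ times itself, whereas the paper substitutes $(1-\epsilon)I(W;X\mid E{=}0)=I(W;X)-I(W;E)-\epsilon I(W;X\mid E{=}1)$ and then drops the resulting negative term $\eta(A)\epsilon(I(W;Y\mid E{=}1)-I(W;X\mid E{=}1))$; these two moves are algebraically equivalent.
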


\begin{proof}
Let  $\bar{\epsilon}\defined 1-\epsilon$. Then
\begin{align}
    I(W;Y)\leq& I(W;Y,E) \nonumber \\
           =&I(W;E)+\epsilon I(W;Y|E=1) 
          +\bar{\epsilon}I(W;Y|E=0) \nonumber \\
          \leq&  I(W;E)+\epsilon I(W;Y|E=1) +\bar{\epsilon}\eta(A)I(W;X|E=0), \label{eq:dissip4}
\end{align}
where the last inequality follows from the definition of $\eta(t)$ in \eqref{eq:EtaDefn}. Observing that
\begin{align*}
  \bar{\epsilon}I(W;X|E=0)=&I(W;X)-\epsilon I(W;X|E=1) -I(W;E),
\end{align*}
and denoting $\etabb(A)\defined 1-\eta(A)$, we can further bound \eqref{eq:dissip4} by
\begin{align}
  I(W;Y)\leq
          & \etabb(A)(I(W;E)+\epsilon I(W;Y|E=1) ) +\eta(A)I(W;X) +\epsilon\eta(A)\left(
          I(W;Y|E=1)-I(W;X|E=1) \right) \nonumber \\
           \leq& ~\etabb(A)\left( I(W;E)+\epsilon I(W;Y|E=1) \right) + \eta(A)I(W;X) 
           \label{eq:dissip3}\\
           =& ~ I(W;X)-\etabb(A)\left(I(W;X)- I(W;E)-\epsilon I(W;Y|E=1) \right),\nonumber
\end{align}
where \eqref{eq:dissip3} follows from $I(W;Y|E=1)\leq I(W;X|E=1)$. The result follows by noting that $I(W;E)\leq h_b(\epsilon)$.
\end{proof}

We now present explicit bounds for the value of $g_d(t,\gamma)$ when $\Expect[|X|^2]\leq 1$ and $P_Z=\calN(0,1)$.

\begin{thm} For the AWGN channel with quadratic constraint, see~\eqref{eq:figamma}, we have $F_I(t,\gamma)= t- g_d(t,\gamma)$ and
  \label{thm:diagonal}

\begin{equation}
  g_d(t,\gamma)\geq \max_{x\in[0,1/2]}2Q\left(\sqrt{\frac{\gamma}{x}}\right)\left(t-
  h_b\left(x\right)-\frac{x}{2}\log\left(
  1+\frac{\gamma}{x}
  \right) \right).
  \label{eq:vertical}
\end{equation}
\end{thm}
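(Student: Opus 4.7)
The plan is to apply Lemma 1 with a well-chosen amplitude threshold $A$, convert the bound into the claimed form by using a Chebyshev estimate for the overflow probability plus Gaussian capacity for the conditional mutual information term, and then optimize.

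First I would set $A = 1/\sqrt{x}$ for $x\in(0,1/2]$ and apply Lemma 1 to the channel $Y = \sqrt{\gamma}X + Z$. Since $|X|\le A$ implies that the effective input $\sqrt{\gamma}X$ to the standard Gaussian additive channel lies in $[-\sqrt{\gamma}A, \sqrt{\gamma}A]$, the relevant contraction coefficient is $\eta(\sqrt{\gamma}A)$, and \eqref{eq:etatv} gives $\bar\eta(\sqrt{\gamma}A) \ge 2Q(\sqrt{\gamma}A) = 2Q(\sqrt{\gamma/x})$. Next, by Markov's inequality and $\Expect[X^2]\le 1$ we get $\epsilon = \Pr[|X|>A] \le 1/A^2 = x$. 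Since $x\le 1/2$, the monotonicity of $h_b$ on $[0,1/2]$ yields $h_b(\epsilon) \le h_b(x)$.

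The third ingredient is to control $\epsilon I(W;Y|E=1)$. By the Gaussian capacity bound,
\begin{equation*}
I(W;Y|E=1) \le I(X;Y|E=1) \le \tfrac{1}{2}\log\!\left(1+\gamma\,\Expect[X^2\mid E=1]\right),
\end{equation*}
and since $\epsilon\,\Expect[X^2\mid E=1]\le \Expect[X^2]\le 1$, we have $\Expect[X^2\mid E=1]\le 1/\epsilon$. Therefore
\begin{equation*}
\epsilon I(W;Y|E=1) \le \tfrac{\epsilon}{2}\log(1+\gamma/\epsilon).
\end{equation*}
The map $\epsilon\mapsto \epsilon\log(1+\gamma/\epsilon)$ is nondecreasing on $(0,1]$ (its derivative equals $\log(1+\gamma/\epsilon) - \frac{\gamma/\epsilon}{1+\gamma/\epsilon}\ge 0$ by the elementary inequality $\log(1+u)\ge u/(1+u)$), so we may replace $\epsilon$ by the larger $x$ to obtain the upper bound $\tfrac{x}{2}\log(1+\gamma/x)$.

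Plugging these into Lemma 1 and letting $s = I(W;X)\le t$, the right-hand side becomes $s - \alpha(s - B)$ with $\alpha = 2Q(\sqrt{\gamma/x})\in[0,1]$ and $B = h_b(x)+\tfrac{x}{2}\log(1+\gamma/x)$; this is affine and nondecreasing in $s$, so maximizing over $s\le t$ gives the desired estimate
\begin{equation*}
I(W;Y) \le t - 2Q(\sqrt{\gamma/x})\!\left(t - h_b(x) - \tfrac{x}{2}\log(1+\gamma/x)\right),
\end{equation*}
for any admissible $(W,X)$. Taking the supremum over admissible couplings yields the lower bound on $g_d(t,\gamma)$, and optimizing over $x\in[0,1/2]$ completes the proof. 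The main technical care is in the direction of inequalities: one needs a \emph{lower} bound on the bracket in Lemma 1, which forces \emph{upper} bounds on both $h_b(\epsilon)$ and $\epsilon I(W;Y|E=1)$; the monotonicity of $\epsilon\log(1+\gamma/\epsilon)$ is the only non-routine step, and the bound is only informative in the regime where the bracket $t - B$ is positive, which happens precisely when the outer maximum over $x$ is nontrivial.
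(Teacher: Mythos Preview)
Your proof is correct and follows essentially the same route as the paper's: apply Lemma~1 with an amplitude threshold, bound the overflow probability via Chebyshev, bound the conditional mutual information via the Gaussian capacity formula, and exploit the monotonicity of $\epsilon\mapsto\epsilon\log(1+\gamma/\epsilon)$. The only cosmetic difference is that the paper places the threshold $A$ on the scaled input $\sqrt{\gamma}X$ (setting $A=\sqrt{\gamma/x}$) whereas you place it on $X$ (setting $A=1/\sqrt{x}$); your explicit handling of the case $I(W;X)<t$ via the affine monotonicity of $s\mapsto s-\alpha(s-B)$ is a nice touch that the paper leaves implicit.
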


\begin{proof}
  Let $A \geq \sqrt{2\gamma}$ and 
	$E=\ones_{\{|X|>A/\sqrt{\gamma}\}}$ and $\EE{E}=\epsilon$. Observe that
\begin{align}
  \EE{\gamma X^2|E=1}\leq \gamma/\epsilon \mbox{ ~and~ } \epsilon\leq \gamma/A^2.
  \label{eq:Xcond}
\end{align}
Therefore, from Lemma \ref{lem:DiagonalLem} and \eqref{eq:etatv},
\begin{align}
    I(W;Y_\gamma)\leq
            I(W;X)-2Q(A)\left(I(W;X)- h_b(\epsilon)-\epsilon I(W;Y_\gamma|E=1) \right).
           \label{eq:short}
\end{align} 
Now observe that, for $\epsilon\leq\gamma/A^2\leq 1/2$, 
\begin{align}
  h_b(\epsilon)
        \leq h_b\left( \gamma/A^2 \right). \label{eq:part1}
\end{align}
In addition,
\begin{align}
  \epsilon I(W;Y_\gamma|E=1)&\leq \epsilon I(X;Y_\gamma|E=1) \nonumber \\
             &\leq \frac{\epsilon }{2}\log\left( 1+\frac{\gamma}{\epsilon} \right)
             \label{eq:GaussApprox} \\
             &\leq \frac{\gamma}{2A^2}\log(1+A^2) \label{eq:part2}.
\end{align}
Here \eqref{eq:GaussApprox} follows from the fact that mutual information
is maximized when $X$ is Gaussian under the power constraint \eqref{eq:Xcond}, and \eqref{eq:part2} follows by noticing that $x\mapsto x\log(1+a/x)$ is monotonically increasing
for any $a>0$. 
Combining \eqref{eq:short}--\eqref{eq:part2}, 
and setting $A = \sqrt{\gamma/x}$, where $0\leq x\leq 1/2$, we have
\begin{align}
  h_b(\epsilon) + \epsilon I(W;Y_\gamma|E=1) \leq h_b\left(x\right)+\frac{x}{2}\log\left(
  1+\frac{\gamma}{x}
  \right) \label{eq:workhorse2}.
\end{align}
Substituting \eqref{eq:workhorse2} in \eqref{eq:short} yields the desired result.
\end{proof}
\begin{remark}
Note that $f_d(x,\gamma)\defined
  h_b\left(x\right)+\frac{x}{2}\log\left(1+\frac{\gamma}{x}\right)$ is 0 at
  $x=0$; furthermore, $f_d(\cdot,\gamma)$ is
  continuous and strictly positive on $(0,1/2)$. Therefore $g_d(t,\gamma)$ is
  strictly positive for $t>0$. The next corollary
  characterizes the behavior of $g_d(t,\gamma)$ for small $t$.
\end{remark}

\begin{cor}
  \label{cor:diagonalRate}
    For fixed $\gamma$, $t=1/u$ and $u$ sufficiently large, there is a constant
    $c_3(\gamma)>0$ dependent on $\gamma$ such
    that
    \begin{equation}
      g_d(1/u,\gamma)\geq\frac{c_3(\gamma)}{u\sqrt{ u\gamma \log u} }e^{-\gamma u
      \log u}.
    \end{equation}
    In particular, $g_d(1/u,\gamma)\geq e^{-\gamma u\log u + O\left(\log\gamma u^{3/2} \right) }.$
    \end{cor}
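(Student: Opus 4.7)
The plan is to plug a well-chosen value of $x$ into the lower bound of \prettyref{thm:diagonal} and perform routine asymptotic analysis. To find the right choice, note that the Gaussian tail $Q(\sqrt{\gamma/x})$ decays like $e^{-\gamma/(2x)}$ as $x\to 0$, so to match the target exponent $-\gamma u \log u$ we should pick $x = x^*(u) \defined 1/(2u\log u)$. This value lies in $(0,1/2)$ for all sufficiently large $u$, so the bound of \prettyref{thm:diagonal} applies with $x = x^*(u)$.

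With this choice, I would carry out the estimate in two independent pieces. For the Gaussian factor, the Mills ratio bound $Q(A) \geq \frac{A}{(A^2+1)\sqrt{2\pi}}\,e^{-A^2/2}$ with $A = \sqrt{\gamma/x^*} = \sqrt{2\gamma u\log u}$ gives
\[
2Q(\sqrt{\gamma/x^*}) \gtrsim \frac{1}{\sqrt{\gamma u\log u}}\,e^{-\gamma u\log u}.
\]
For the bracket, I would Taylor-expand for small $x^*$. Since $-x\log x$ supplies the leading $\log u$, one finds $h_b(x^*) = \frac{1}{2u} + O\!\left(\frac{\log\log u}{u\log u}\right)$, and using $\log(1+\gamma/x^*) = \log(\gamma/x^*) + O(x^*/\gamma)$ one gets $\frac{x^*}{2}\log(1+\gamma/x^*) = \frac{1}{4u} + O\!\left(\frac{\log(\gamma\log u)}{u\log u}\right)$. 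Adding these gives $h_b(x^*) + \frac{x^*}{2}\log(1+\gamma/x^*) = \frac{3}{4u} + o(1/u)$ (for fixed $\gamma$), so the bracket is at least $\frac{1}{4u} - o(1/u) \geq \frac{1}{5u}$ once $u$ is large (with threshold depending on $\gamma$). Multiplying the two estimates yields the first claim, with $c_3(\gamma)$ absorbing numerical constants and any residual $\gamma$-dependence.

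The second assertion then follows by taking the logarithm of the first: the exponent is $-\gamma u \log u - \log u - \tfrac{1}{2}\log(\gamma u\log u) + O(1)$, and every term beyond $-\gamma u\log u$ is $O(\log(\gamma u^{3/2}))$. The only (mild) obstacle in execution is confirming that the error terms in the expansion of $h_b(x^*) + (x^*/2)\log(1+\gamma/x^*)$ are genuinely dominated by the gap $\tfrac{1}{4u}$; this is automatic because every correction carries an extra $1/\log u$ factor relative to the leading $1/u$, so for any fixed $\gamma$ the threshold $u_0(\gamma)$ beyond which the $\tfrac{1}{5u}$ lower bound holds is explicit.
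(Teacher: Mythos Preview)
Your proposal is correct and follows essentially the same approach as the paper: both choose $x = 1/(2u\log u)$ in the bound of \prettyref{thm:diagonal}, estimate the $Q$-factor via the standard Mills-ratio inequality, and expand the bracket to obtain $t - h_b(x) - \tfrac{x}{2}\log(1+\gamma/x) = \tfrac{1}{4u} + o(1/u)$. Your write-up in fact supplies more of the intermediate asymptotic details than the paper does.
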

\begin{proof}
Let $x = \frac{1}{2u \log u}$ in the expression being maximized in \eqref{eq:vertical}. Since $\left( \frac{x}{1+x^2}\right)e^{-x^2/2}<\sqrt{2\pi }Q(x)<x^{-1}e^{-x^2/2} $ for $x>0$, for sufficiently large $t$ 
\[
Q(\sqrt{2u\gamma \log u})= \frac{e^{-\gamma u \log u}}{2\sqrt{\pi u\gamma \log u} }+ O(\frac{e^{-\gamma u\log u}}{(u\gamma\log u)^{3/2}})
\] and
  \begin{equation}
    g_d\left( \frac{1}{2u\log u},\gamma \right)\geq \frac{3}{4u}+O\left( \frac{\log
    \log u}{u \log u} \right),
 \end{equation}
 the result follows.
\end{proof}

\begin{remark}
  \label{remark:diag}
  Fix $\gamma>0$ and define a binary random variable $X$ with $\Pr[X=a]=1/a^2$ and
  $\Pr[X=0]=1-1/a^2$ for $a>0$. Furthermore, let  $\Xh \in \{0,a\}$ denote the minimum distance 
  estimate of $X$ based on $Y_\gamma$. Then the probability of
  error satisfies $P_e = \Pr[X \neq \Xh] \leq
  Q(\sqrt{\gamma}a/2)$. In addition, $h_b\left( Q(\sqrt{\gamma}a/2)
  \right)=O(e^{-\gamma a^2/8} \sqrt{\gamma}a)$  and $H(X)=a^{-2}\log a (2+o(1))$ as $a\to
  \infty$. 
  Therefore, 
  \begin{equation}
    h_b\left( Q(\sqrt{\gamma}a/2)\right)\leq e^{-\frac{\gamma}{H(X)}\log
    \frac{1}{H(X)}+O(\log(\gamma/H(X))}.
  \end{equation}
   Using Fano's inequality, $I(X;Y_\gamma)$ can be bounded as  
  \begin{align*}
       I(X;Y_\gamma) &\geq I(X;\Xh)\\
                     &\geq H(X) - h_b(P_e)\\
                     &\geq H(X)-h_b\left( Q(\sqrt{\gamma}a/2) \right)\\
                     &=H(X)- e^{-\frac{\gamma}{H(X)}\log \frac{1}{H(X)}+O(\log(\gamma/H(X))}.
  \end{align*}
  Setting $W=X$, this result yields the sharp asymptotics \eqref{eq:FuIneq}.

\end{remark}

\begin{remark}
\label{remark:subgauss}
If the input is constrained to be subgaussian, the bound on $g_d(t)$ can be improved to polynomial in $t$. 
To see this, assume that $X$ is $s$-subgaussian, i.e. $\mathbb{P}\left[|X|>a\right]\leq \exp(-a^2/(2s))$ for $s>0$. Combining \eqref{eq:short}, \eqref{eq:part1} and \eqref{eq:GaussApprox}, we have
\begin{align}
g_d(t)\geq 2Q(A)\left(t-h_b(\epsilon)-\frac{\epsilon}{2}\log\left(1+\frac{\gamma}{\epsilon} \right) \right).
\end{align}
Since $\epsilon\leq \exp(-A^2/(2\gamma s))\defined y$, and assuming $y\leq 1/2$, the previous inequality yields 
\begin{align}
g_d(t)&\geq 2Q\left(\sqrt{2\gamma s \log y^{-1}}\right)\left(t-h_b\left(y\right)-\frac{y}{2}\log\left(1+\frac{\gamma}{y} \right) \right)\\
&\asymp \frac{y^{\gamma s}}{\sqrt{\gamma s \log y^{-1}}}\left(t + \frac{y}{2}\log y\right). \label{eq:polyapprox}
\end{align}
Choosing $y=t/\log(1/t)$ for $t \leq 1/4$, \eqref{eq:polyapprox} results in
\begin{equation}
g_d(t)\gtrsim \frac{1}{\sqrt{\gamma s}} t^{\gamma s+1} \pth{\log \frac{1}{t}}^{-(\gamma s+1/2)}.
\end{equation}
Consequently, $g_d(t)$ is polynomial in $t$, in contrast with the exponential behavior in \eqref{eq:FuIneq} under only second moment constraint.
\end{remark}

\section{Diagonal bound for general additive noise}
\label{sec:GeneralDiagonal}

In this section, we extend the diagonal bound derived in Theorem \ref{thm:diagonal} to arbitrary noise density and generalizing the power constraint to an $L_p$-norm constraint $\EE{|X|^p}\leq \gamma$ and $p\geq 1$. 

\begin{thm}
  \label{thm:DiagGeneral}
  Assume that $W\to X\to Y$, where $Y=X+Z$, $X$ and $Z$ are independent, $\EE{|X|^p}\leq \gamma$, and $Z$ has an absolute continuous distribution. Then
  \begin{equation}
    I(W;Y)\leq I(W;X) - g_d(I(W;X),\gamma),
    \label{eq:diaggeneral}
  \end{equation}
  where
  \begin{align}
    g_d(t,\gamma)\defined &~ \frac{1}{2} (1-\eta(A_2^*))t, \label{eq:gd-general}\\
    A_2^* \defined &~ \inf\left\{A>0\colon 18\gamma A^{-p}\log  (A^p)\leq t,~ A^p\geq \max\{e,2\gamma,\alpha^* e^3/ \gamma\}\right\}, \label{eq:Astar} \\
    \alpha^* \triangleq &~\inf\left\{\alpha>0\colon \eta\pth{\frac{1}{2\alpha}}\leq 1/3\right\} \label{eq:alpha}
  \end{align}
  and the amplitude-constrained contraction coefficient $\eta(\cdot)$ is defined in~\eqref{eq:EtaDefn}.
\end{thm}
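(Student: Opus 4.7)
The strategy is to apply \prettyref{lem:DiagonalLem} with the truncation level $A=A_2^*$ and $E=\ones_{\{|X|>A\}}$, so that
\[
I(W;Y)\le I(W;X)-\bar\eta(A)\pth{I(W;X)-h_b(\epsilon)-\epsilon I(W;Y|E=1)}
\]
with $\epsilon=\Pr[|X|>A]\le\gamma A^{-p}$ by Markov's inequality. Writing $t=I(W;X)$, the target bound $g_d(t,\gamma)=\frac12(1-\eta(A_2^*))t$ reduces to showing the combined error satisfies $h_b(\epsilon)+\epsilon I(W;Y|E=1)\le t/2$.

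The binary-entropy term is controlled by $h_b(\epsilon)\le\epsilon\log(e/\epsilon)\lesssim \gamma A^{-p}\log A^p$ using $\epsilon\le \gamma A^{-p}$ together with $A^p\ge\max\{e,2\gamma\}$. The main new ingredient, needed since the Gaussian maximum-entropy step \eqref{eq:GaussApprox} is unavailable for general noise, is a bound on $\epsilon I(W;Y|E=1)$ via quantization. Let $\tilde X=\lfloor \alpha^* X\rfloor/\alpha^*$. Since $\tilde X$ is determined by $X$, the chain rule gives
\[
I(W;Y|E=1)\le I(\tilde X;Y|E=1)+I(W;Y|\tilde X,E=1)\le H(\tilde X|E=1)+I(W;Y|\tilde X,E=1).
\]
Conditionally on $\tilde X=\tilde x$, the channel input is $\tilde x+(X-\tilde x)+Z$ with residual $X-\tilde x\in[0,1/\alpha^*)$; translation invariance of the additive-noise channel and the amplitude-constrained SDPI at level $1/(2\alpha^*)$ then yield
\[
I(W;Y|\tilde X,E=1)\le\eta\pth{\tfrac{1}{2\alpha^*}}I(W;X|\tilde X,E=1)\le\eta\pth{\tfrac{1}{2\alpha^*}}I(W;X|E=1),
\]
the second step using $I(W;X|\tilde X,E=1)\le I(W;X|E=1)$ since $\tilde X$ is a function of $X$. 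Combining with $\epsilon I(W;X|E=1)\le I(W;X)=t$ (because $E$ is also a function of $X$) and the defining property $\eta(1/(2\alpha^*))\le 1/3$ of $\alpha^*$ from \eqref{eq:alpha} gives
\[
\epsilon I(W;Y|E=1)\le\epsilon H(\tilde X|E=1)+t/3.
\]

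The remaining entropy is handled by a standard maximum-entropy estimate. From $|\tilde X|\le|X|+1/\alpha^*$ one gets $\EE{|\tilde X|^p|E=1}\lesssim \gamma/\epsilon+(\alpha^*)^{-p}$, and the maximum entropy of a random variable on the lattice $(1/\alpha^*)\integers$ with $p$-th moment at most $M$ is $\log\alpha^*+\frac1p\log M+O_p(1)$. Invoking monotonicity of $\epsilon\mapsto\epsilon\log(1/\epsilon)$ together with $\epsilon\le\gamma A^{-p}$, this yields $\epsilon H(\tilde X|E=1)\lesssim \gamma A^{-p}\pth{\log\alpha^*+\log A^p+O_p(1)}$. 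The three lower bounds $A^p\ge e$, $A^p\ge 2\gamma$, and $A^p\ge\alpha^* e^3/\gamma$ in \eqref{eq:Astar} are precisely those needed to absorb $\log\alpha^*$ and the $O_p(1)$ constants into a constant multiple of $\log A^p$; with careful bookkeeping the multiple can be taken to be $3$, so that $h_b(\epsilon)+\epsilon H(\tilde X|E=1)\le 3\gamma A^{-p}\log A^p\le t/6$ by the defining inequality $18\gamma A^{-p}\log A^p\le t$ of $A_2^*$. Adding the $t/3$ contribution gives $h_b(\epsilon)+\epsilon I(W;Y|E=1)\le t/2$, as required.

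The chief obstacle is balancing the \emph{two} scales: the outer truncation $A_2^*$ controls the moment tail and the binary entropy of $E$, while the inner quantization $1/\alpha^*$ controls both the residual contraction $\eta(1/(2\alpha^*))$ and the entropy $H(\tilde X)$ of the quantized component. The precise constants defining $A_2^*$ and $\alpha^*$ are tuned so that neither side exhausts the $t/2$ error budget---specifically, $\eta(1/(2\alpha^*))\le 1/3$ is matched against the complementary $1/6$ budget allotted to $h_b(\epsilon)+\epsilon H(\tilde X|E=1)$, and the three conditions on $A^p$ ensure that the $\log\alpha^*$ and $O_p(1)$ contributions do not inflate the factor $18$ in the definition of $A_2^*$.
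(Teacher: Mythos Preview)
Your proposal is correct and follows essentially the same route as the paper: apply \prettyref{lem:DiagonalLem}, bound $\epsilon I(W;Y|E=1)$ by quantizing $X$ to a $(1/\alpha^*)$-grid, split via the chain rule into an entropy term $H(\tilde X|E=1)$ plus a residual $I(W;Y|\tilde X,E=1)$ controlled by the amplitude-constrained coefficient $\eta(1/(2\alpha^*))\le 1/3$, and then use the integer-entropy estimate (the paper's \prettyref{eq:U}--\prettyref{eq:hApprox}) to obtain $h_b(\epsilon)+\epsilon H(\tilde X|E=1)\le 3\gamma A^{-p}\log A^p$; the budget split $t/3+t/6=t/2$ is exactly the paper's. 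The only item you omit is the verification that $\alpha^*<\infty$ (the paper argues $\eta(a)\le\etaTV(a)\to 0$ as $a\to 0$ via $L^1$-continuity of translation), without which the definition \prettyref{eq:alpha} could be vacuous.
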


\begin{cor}
\label{cor:strict}	
For any $p\geq 1$ and any $\gamma>0$, the following statements are equivalent:
\begin{enumerate}[(a)]
	\item Non-linear SDPI \prettyref{eq:diaggeneral} holds with $g_d(t,\gamma)>0$ whenever $t>0$.
	\item $S \cap (S+x)$ has non-zero Lebesgue measure for all $x\in\reals$, where $S \triangleq \{z: p_Z(z)>0\}$ is
	the support of  the probability density function $p_Z$ of $Z$. 
\end{enumerate}
\end{cor}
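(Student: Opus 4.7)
The plan is to first reformulate condition (b) in the language of total variation. Observe that $\theta(\delta) = \TV(P_Z, P_{Z+\delta}) = 1$ if and only if the two laws are mutually singular, and since both possess densities $p_Z(\cdot)$ and $p_Z(\cdot - \delta)$, this occurs exactly when $S \cap (S+\delta)$ has Lebesgue measure zero. Hence (b) is equivalent to requiring $\theta(\delta) < 1$ for every $\delta \in \reals$.

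For the direction (a)$\Rightarrow$(b), I will argue by contrapositive. Suppose some $\delta_0 \neq 0$ satisfies $\theta(\delta_0) = 1$. Construct a binary random variable $X$ taking values in $\{0, \delta_0\}$ with $\Pr[X=\delta_0] = p$ small enough that $p |\delta_0|^p \leq \gamma$, and set $W = X$. Then the conditional outputs $P_{Y|X=0} = P_Z$ and $P_{Y|X=\delta_0} = P_{Z+\delta_0}$ have essentially disjoint supports, so $X$ is almost surely a function of $Y$; this yields $I(W;Y) = I(W;X) = h_b(p) > 0$, which would force $g_d(h_b(p), \gamma) \leq 0$ in \eqref{eq:diaggeneral}, contradicting (a).

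For (b)$\Rightarrow$(a), I will show that the formula \eqref{eq:gd-general} is strictly positive for $t > 0$. First verify $A_2^* < \infty$: since $Z$ has a density, Scheff\'e's lemma combined with $L^1$-continuity of translation makes $\delta \mapsto \theta(\delta)$ continuous with $\theta(0) = 0$, so $\eta(A) \leq \etaTV(A) = \sup_{|\delta| \leq 2A} \theta(\delta) \to 0$ as $A \to 0^+$, whence $\alpha^*$ is finite; then the defining inequality of $A_2^*$ in \eqref{eq:Astar} is eventually satisfied since $A^{-p}\log A^p \to 0$. Next, under (b) the continuous function $\theta$ is strictly less than $1$ at every point and attains its maximum on the compact interval $[-2A_2^*, 2A_2^*]$; therefore $\etaTV(A_2^*) < 1$. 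Combined with the general bound $\eta(A) \leq \etaTV(A)$ (Dobrushin / Cohen--Kemperman--Zbaganu), this gives $\eta(A_2^*) < 1$, and consequently $g_d(t,\gamma) = \frac{1}{2}(1 - \eta(A_2^*)) t > 0$.

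The principal obstacle is the compactness-plus-continuity step in the second direction: pointwise strict inequality $\theta(\delta) < 1$ does not by itself guarantee KL contraction of the channel. What rescues the argument is that the amplitude constraint restricts the relevant translations to a bounded interval, on which continuity of $\theta$ upgrades the pointwise bound to a uniform one; the finiteness of $A_2^*$, which depends on $\alpha^*$ and hence again on the continuity of $\theta$ at zero, is what ensures the formula can actually be applied for every $t > 0$.
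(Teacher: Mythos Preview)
Your proposal is correct and follows essentially the same route as the paper: the same binary counterexample for (a)$\Rightarrow$(b), and for (b)$\Rightarrow$(a) the same reduction via \prettyref{thm:DiagGeneral} to showing $\eta(A_2^*)<1$, settled by the continuity of $\theta$ on a compact interval. The only cosmetic difference is that the paper invokes the CKZ equivalence $\etaKL=1\iff\etaTV=1$ whereas you invoke the (stronger, same source) inequality $\etaKL\le\etaTV$; also, your verification that $\alpha^*$ and $A_2^*$ are finite duplicates a step already carried out inside the proof of \prettyref{thm:DiagGeneral}.
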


The proof of Theorem \ref{thm:DiagGeneral} relies on discretizing $X$. Consequently, we first derive a data processing result for the case where $X$ is an integer and a deterministic function of $W$, stated in the next lemma. We note that many steps in the proof of Lemma \ref{lem:Discrete} will be reused for deriving Theorem \ref{thm:DiagGeneral}.
\begin{lem}
  \label{lem:Discrete}
  Let $W\to X \to Y$, $Y=X+Z$, and $W\to X$ be a deterministic mapping. In addition, assume that $X$ takes values on some $\Delta$-grid for $\Delta>0$ (i.e. $X/\Delta \in \integers$ almost surely) and $\EE{|X|^p}\leq \gamma$, $p\geq 1$. Then
  \begin{equation}
     \label{eq:discretebound}
      I(W;Y)\leq I(X;Y)\leq \left(1-\frac{\etabb(A_1^*)}{2} \right)H(X),
  \end{equation}
  where
  \begin{equation}
    A_1^* \defined \min\left\{A \colon A^p\geq \max\{e,2\gamma,e^3/\gamma\Delta\}, A^{-p}\log A^p \leq \frac{H(X)}{6\gamma} \right\}
    \label{eq:a1star}
  \end{equation}

\end{lem}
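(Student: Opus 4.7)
The plan is to reduce to an application of Lemma \ref{lem:DiagonalLem}. The first inequality in \eqref{eq:discretebound}, $I(W;Y)\leq I(X;Y)$, is the ordinary data-processing inequality, so the task is to prove the second. Since $W\to X$ is deterministic, $I(W;X)=H(X)$, and I would apply Lemma \ref{lem:DiagonalLem} to the trivial Markov chain $X\to X\to Y$ (i.e., with $W$ there replaced by $X$) at truncation level $A=A_1^*$: writing $E\defined \mathbf{1}\{|X|\geq A\}$ and $\epsilon\defined \Pr[|X|>A]$, the lemma yields
\begin{equation*}
I(X;Y) \leq H(X) - \etabb(A)\bigl(H(X) - h_b(\epsilon) - \epsilon\, I(X;Y\mid E=1)\bigr).
\end{equation*}
Bounding $I(X;Y\mid E=1)\leq H(X\mid E=1)$, the entire problem reduces to proving
\begin{equation*}
h_b(\epsilon) + \epsilon H(X\mid E=1) \leq H(X)/2,
\end{equation*}
after which one immediately obtains $I(X;Y)\leq (1-\etabb(A_1^*)/2)H(X)$.

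The second step is to control each of $h_b(\epsilon)$ and $\epsilon H(X\mid E=1)$ by $\gamma A^{-p}\log A^p$ up to absolute constants. Markov's inequality applied to $\EE{|X|^p}\leq \gamma$ gives $\epsilon\leq \gamma/A^p$ and $\EE{|X|^p\mid E=1}\leq \gamma/\epsilon$. Under $A^p\geq 2\gamma$ (forcing $\epsilon\leq 1/2$) and $A^p\geq e$ built into the definition of $A_1^*$, the elementary inequality $h_b(\epsilon)\leq -\epsilon\log\epsilon+\epsilon$ combined with $\epsilon\leq \gamma/A^p$ quickly delivers $h_b(\epsilon)\lesssim \gamma A^{-p}\log A^p$. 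For $H(X\mid E=1)$, I would invoke the standard maximum-entropy bound on the lattice $\Delta\integers$: any probability distribution on $\Delta\integers$ with $p$-th absolute moment at most $M$ has entropy at most $\tfrac{1}{p}\log(M/\Delta^p)+C_p$, obtained by Lagrangian duality against the exponential-tilt family $\exp(-\lambda|x|^p)$. Applying this with $M=\gamma/\epsilon$ and multiplying by $\epsilon$ expands $\epsilon H(X\mid E=1)$ into several manageable terms, plus a potentially dangerous piece $-\epsilon\log\Delta$.

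The remaining condition $A^p\geq e^3/(\gamma\Delta)$ inside the definition of $A_1^*$ is exactly what absorbs this dangerous piece: it says $\log(1/\Delta)\leq \log(\gamma A^p)-3$, so that $-\epsilon\log\Delta$ is bounded by a constant multiple of $\gamma A^{-p}\log A^p$. Tallying all contributions, I expect an estimate $h_b(\epsilon)+\epsilon H(X\mid E=1)\leq 3\,\gamma A^{-p}\log A^p$, which combined with the final defining inequality $A^{-p}\log A^p\leq H(X)/(6\gamma)$ of $A_1^*$ yields the target $h_b(\epsilon)+\epsilon H(X\mid E=1)\leq H(X)/2$ and completes the proof. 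The principal obstacle I anticipate is the bookkeeping inside the maximum-entropy step: the Lagrangian bound produces several lower-order terms, and each of the three threshold conditions $A^p\geq e$, $A^p\geq 2\gamma$, $A^p\geq e^3/(\gamma\Delta)$ in \eqref{eq:a1star} is present precisely to sweep one such lower-order contribution into the dominant $\gamma A^{-p}\log A^p$ scale, so the constants must be tracked carefully to ensure the final factor is at most $3$.
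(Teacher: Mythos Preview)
Your plan is correct and matches the paper's proof in overall structure: apply Lemma~\ref{lem:DiagonalLem} with $W=X$, reduce to showing $h_b(\epsilon)+\epsilon H(X\mid E=1)\leq H(X)/2$, and control the left side by $3\gamma A^{-p}\log A^p$ using the moment constraint together with the three thresholds in~\eqref{eq:a1star}. The one substantive difference is in how $H(X\mid E=1)$ is bounded. You propose a direct $p$-th-moment maximum-entropy bound $H\leq \tfrac{1}{p}\log(M/\Delta^p)+C_p$ via Lagrangian duality. The paper instead first reduces to a \emph{first}-moment bound by observing that on $\{|X|\geq A\}$ with $A\geq 1$ one has $|X|\leq |X|^p$, hence $\EE{|X|\mid E=1}\leq \gamma/\epsilon$, and then applies the elementary integer-entropy inequality $H(U)\leq(\EE{|U|}+1)h_b\bigl(\tfrac{1}{\EE{|U|}+1}\bigr)+\log 2$ to $U=X/\Delta$. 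The advantage of the paper's route is that all constants are explicit and $p$-independent, so the chain of inequalities lands exactly on the factor $3$ that matches the $H(X)/(6\gamma)$ in the definition of $A_1^*$; with your $p$-th-moment approach you would need to verify that $C_p$ is uniformly bounded (and small enough) in $p\geq 1$ to hit the same constant, which is the bookkeeping risk you correctly flagged.
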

\begin{proof}
Let $E\defined \ones_{\{|X|\geq A\}}$ and $\epsilon \defined \Pr\left[E=1 \right]$. Then, from Lemma \ref{lem:DiagonalLem},
\begin{align}
    I(X;Y)\leq H(X) -  \bar{\eta}(A)\left( H(X) - h_b(\epsilon) -\epsilon
            H(X|E=1)\right) \label{eq:discrete1}.         
\end{align}
Observe that for $\EE{|X|^p}\leq \gamma$,
\begin{equation}
  \epsilon = \Pr[|X|\geq A ]\leq \gamma/A^p,
\end{equation}
and, for $A\geq 1$
\begin{equation}
    \EE{|X||E=1}\leq \EE{|X|^p|E=1}\leq \gamma/\epsilon.
\end{equation}
In addition, for any integer-valued random variable $U$
we have (cf.~\cite[Lemma 13.5.4]{cover-thomas})
\begin{equation}
  H(U)\leq \left( \EE{|U|}+1 \right)h_b\left( \frac{1}{\EE{|U|}+1} \right)+\log 2.
  \label{eq:U}
\end{equation}
Consequently, for $A^p\geq  2\gamma$,
\begin{align}
    &~h_b(\epsilon)+ \epsilon H(X|E=1) \nonumber\\
    \leq &~ h_b(\epsilon)+\left(\frac{\gamma}{\Delta}+\epsilon 
    \right)h_b\left( \frac{\epsilon}{\frac{\gamma}{\Delta}+\epsilon}\right) +\epsilon\log 2 \label{eq:newstep1}\\
    \leq &~ h_b\left( \frac{\gamma}{A^p} \right)+\frac{\gamma}{A^p}\left(\frac{A^p}{\Delta} +1\right)h_b\left(
    \frac{1}{1+A^p/\Delta} \right)+\frac{\gamma}{A^p}\log 2 \label{eq:newstep2}\\
    \leq &~ \frac{\gamma}{A^p}\log A^p +\frac{\gamma}{A^p}\left(1+\log \frac{2}{\gamma}\right)+\frac{\gamma}{A^p}\left(\log\left(\frac{A^p}{\Delta}+1 \right)+\frac{A^p}{\Delta}\log\left(1+\frac{\Delta}{A^p}\right)\right) \label{eq:decomp1}\\
    \leq &~ \frac{\gamma}{A^p}\log A^p +\frac{\gamma}{A^p}\left(2+\frac{2}{\gamma}\right)+\frac{\gamma}{A^p}\log\left(\frac{A^p}{\Delta}+1 \right) \label{eq:decomp2}\\
    \leq &~ \frac{2\gamma}{A^p}\log A^p+\frac{\gamma}{A^p}\left(3+\log \frac{2}{\gamma\Delta}\right), \label{eq:decomp3}
\end{align}
where \eqref{eq:newstep1} follows from \eqref{eq:U}, $\eqref{eq:newstep2}$ follows from \eqref{eq:newstep1} being increasing in $\epsilon$ for $\epsilon\in [0,1/2]$, which is satisfied due to the assumption $A^p\geq 2\gamma$, \eqref{eq:decomp1} and  \eqref{eq:decomp2}  follows from the fact that $-(1-x)\log(1-x)\leq x$ and  $\log(x+1)\leq x$ for $x\in [0,1]$, respectively, and \eqref{eq:decomp3} follows by observing that $\log(x+1)\leq \log x +1$. Assuming $A^p\geq {e^3}/\gamma \Delta$, the last inequality yields
\begin{align}
    h_b(\epsilon)+ \epsilon H(X|E=1)\leq \frac{3\gamma \log A^p}{A^p}. \label{eq:hApprox}
\end{align}
%
Since the right-hand side of the previous equation is strictly decreasing for $A^p\geq \exp(1)$, $A$ can be chosen sufficiently large such that $\frac{3\gamma \log A^p}{A^p}\leq H(X)/2$. Choosing $A=A_1^*$, where $A_1^*$ is given in \eqref{eq:a1star}, and combining \eqref{eq:hApprox} and \eqref{eq:discrete1}, we conclude that
\begin{equation*}
   I(X;Y)\leq \left(1-\frac{\bar{\eta}(A_1^*)}{2}\right)H(X),
\end{equation*}
proving the lemma.
\end{proof}

\begin{proof}[Proof of \prettyref{thm:DiagGeneral}]
We start by verifying that $\alpha$  defined in \prettyref{eq:alpha} is finite and so is $A_2^*$ in \prettyref{eq:Astar}. Since
$\eta(a) \leq \etaTV(a)$, it suffices to show that $\etaTV(a)$ vanishes as $a\to 0$. Recall $\theta(\delta) = \frac{1}{2} \int |p_{Z}(z)- p_{Z}(z+\delta)|\diff z$ as defined in \prettyref{eq:theta}. By the denseness of compactly supported continuous functions in $L^1$, $\theta(a)\to0$ as $a\to0$. Furthermore, the translation invariance and the triangle inequality of total variation imply that $|\theta(a)-\theta(a')| \leq \theta(|a-a'|)$ and hence $\theta$ is uniformly continuous. Therefore, 
\begin{equation}
	\etaTV(a) = \max_{|\delta|\leq 2a} \theta(\delta)
	\label{eq:etaTV-max}
\end{equation}
is continuous in $a$ on $\reals_+$, which ensures that $\alpha^*$ is finite.

From Lemma \ref{lem:DiagonalLem}, and once more denoting $E\defined \ones_{\{|X|\geq A\}}$, $\epsilon \defined \Pr[|X|\geq A]$ and $\etabb(A)=1-\eta(A)$, we have
\begin{align}
  I(W;Y) 
         &   \leq I(W;X) -\bar{\eta}(A)\left( I(W;X)-h_b(\epsilon)-\epsilon
         I(W;Y|E=1) \right). \label{eq:quant1}
\end{align}
Let $Q_\alpha=\lfloor \alpha X \rfloor$.
Then
\begin{align*}
    I(W;Y) &\leq I(Q_\alpha;Y) + I(W;Y|Q_\alpha)\\
    &\leq I(Q_\alpha;Y) + \eta\left( \frac{1}{2\alpha} \right) I(W;X|Q_\alpha) \\
    &\leq H(Q_\alpha) +\eta\left( \frac{1}{2\alpha} \right)I(W;X).
\end{align*}
Thus,
\begin{equation}
  \label{eq:quant2}
    I(W;Y|E=1)\leq H(Q_\alpha|E=1) +\eta\left(  \frac{1}{2\alpha}
    \right)I(W;X|E=1).
\end{equation}
Since
\begin{equation}
  \label{eq:quant3}
    \epsilon I(W;X|E=1) \leq I(W;X),
\end{equation}
combining \eqref{eq:quant1}--\eqref{eq:quant3} gives
\begin{align}
  \label{eq:quant4}
  I(W;Y)\leq I(W;X) -\bar{\eta}(A)\left( I(W;X)-h_b(\epsilon)-\epsilon
         H(Q_\alpha|E=1) -\eta\left(  \frac{1}{2\alpha}
    \right)I(W;X) \right).
\end{align}

Since $\EE{|Q_\alpha||E=1     }\leq \alpha\gamma/A^p$, from \eqref{eq:U} and  \eqref{eq:hApprox} it follows that for $A^p\geq \alpha e^3/\gamma$,
\begin{equation}
  \label{eq:quant5}
     h_b(\epsilon)+ \epsilon H(Q_\alpha|E=1)  \leq
     \frac{3\gamma \log (A^p)}{A^p}.
\end{equation}
Thus, choosing $\alpha$ such that $\eta(1/2\alpha)\leq 1/3$, and $A$ sufficiently large such that $3\gamma A^{-p}\log A^p\leq I(W;X)/6$, \eqref{eq:quant5} becomes
 \begin{equation}
   I(W;Y)\leq I(W;X)\left( 1-\frac{\bar{\eta}(A)}{2} \right),
 \end{equation}
proving the result upon choosing $A = A_2^*$.
\end{proof}

\begin{proof}[Proof of \prettyref{cor:strict}]
To show (a) $\Rightarrow$ (b), suppose that $S \cap (S+x_0)$ has zero Lebesgue measure for some $x_0$. Consider $W=X=x_0 B$, where $B\sim\text{Bernoulli}(\epsilon)$ with $\Expect[|X|^p]=\epsilon |x_0|^p \leq \gamma$. Since $\TV(P_Z,P_{Z+z})=0$, $X$ can be perfectly decoded from $Y=X+Z$ and hence $I(W;Y)=I(W;X)=H(X)$, which shows that $F_I(t) = t$ in a neighborhood of zero.

To show (b) $\Rightarrow$ (a), 
in view of \prettyref{thm:DiagGeneral}, it suffices to show that $\eta(A)<1$ for all finite $A$.
Recall that for any channel, $ \etaKL = 1$ if and only if $\etaTV = 1 $ (\cite[Proposition II.4.12]{CKZ98}). 
Therefore it is equivalent to show that  $\etaTV(A)<1$ for all finite $A$.
Suppose otherwise, \ie, $\etaTV(A)=1$ for some $A>0$. By \prettyref{eq:etaTV-max}, there exists some $\delta\in[-A,A]$ such that $\TV(P_Z,P_{Z+\delta})=1$, which means that $S \cap (S+\delta)$ has zero Lebesgue, contradicting the assumption (b) and completing the proof.
\end{proof}


\section{Minimum mean square error and near-Gaussianness}
\label{sec:mmse}
We now take a step back from strong data-processing inequalities and present an ancillary result of independent interest.  We  prove that any random variable for which there exists an almost optimal (in terms of the mean-squared error) linear estimator operating on the Gaussian-corrupted measurement must necessarily be almost Gaussian (in terms of the Kolmogorov-Smirnov distance). We will use this result in the next section to bound the horizontal gap $g_h(t,\gamma)$ for Gaussian noise.
%
%

Throughout the rest of the paper we make use of Fourier-analytic tools and, in particular,  Esseen's inequality, stated below for reference.

\begin{lem}[{\cite[Eq.~(3.13), p.~538]{feller_introduction_1966}}]
\label{lem:Esseen}
Let $P$ and $Q$ be two distributions with characteristic functions $\cf_P$ and $\cf_Q$, respectively. In addition, assume that $Q$ has a bounded density $q$. Then 
\begin{equation}
  \dks(P,Q)\leq \frac{1}{\pi}\int_{-T}^T \left| \frac{\cf_P(\omega)-\cf_Q(\omega)}{\omega} \right|\diff \omega +\frac{24 \|q\|_\infty}{\pi T},
\end{equation}
where 
$ \dks(P,Q)\defined \sup_{x \in \Reals} |F_P(x)-F_Q(x)|$ is the Kolmogorov-Smirnov distance.
\end{lem}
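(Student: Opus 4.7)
The plan is to follow the classical smoothing-plus-Fourier-inversion argument originally due to Esseen. Define $\Delta(x) \defined F_P(x) - F_Q(x)$ and $D \defined \dks(P,Q) = \sup_x |\Delta(x)|$. The strategy is to (i) show that $D$ is controlled, up to an additive error of the right order, by the supremum of a smoothed version $\Delta * K_T$, where $K_T$ is a probability kernel whose characteristic function is supported in $[-T,T]$, and (ii) evaluate that smoothed supremum explicitly via Fourier inversion in terms of $\cf_P - \cf_Q$.

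For step (i), I would choose a smoothing kernel $K_T$ supported in the frequency band $[-T,T]$ (the Fej\'er-type kernel with $\widehat{K_T}(\omega) = (1-|\omega|/T)_+$ is a natural choice, whose spatial density decays like $1/x^2$). The smoothing lemma to establish is
\[
D \;\leq\; 2\sup_x |(\Delta * K_T)(x)| + \frac{c\,\|q\|_\infty}{T},
\]
for an absolute constant $c$. The proof rests on the observation that near a point $x_0$ where $|\Delta|$ nearly attains $D$, the Lipschitz continuity of $F_Q$ (with constant $\|q\|_\infty$) combined with the monotonicity of $F_P$ forces $\Delta$ to stay within $\|q\|_\infty|x-x_0|$ of its near-maximum value throughout an interval of length $\sim D/\|q\|_\infty$. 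Averaging $\Delta$ against $K_T$ centered at an appropriate shift of $x_0$, and carefully estimating the tail mass of $K_T$ outside this interval (this is where the kernel's $1/x^2$ decay is used), then gives the stated inequality; optimizing the kernel leads to the constant $c=24$.

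For step (ii), I would note that by integration by parts,
\[
\int \Delta(x) e^{-i\omega x}\,\diff x = \frac{\cf_Q(\omega)-\cf_P(\omega)}{i\omega},
\]
and since $\widehat{K_T}$ is supported in $[-T,T]$ with $|\widehat{K_T}|\leq 1$, Fourier inversion for the convolution $\Delta * K_T$ yields
\[
(\Delta * K_T)(x) = \frac{1}{2\pi} \int_{-T}^{T} e^{i\omega x}\,\frac{\cf_P(\omega)-\cf_Q(\omega)}{i\omega}\,\widehat{K_T}(\omega)\,\diff \omega.
\]
Taking absolute values inside the integral and then the supremum over $x$ gives
\[
\sup_x|(\Delta*K_T)(x)| \;\leq\; \frac{1}{2\pi}\int_{-T}^{T}\left|\frac{\cf_P(\omega)-\cf_Q(\omega)}{\omega}\right|\diff\omega,
\]
and combining with step (i) (the factor of $2$ there matches the $1/(2\pi)$ here to yield the $1/\pi$ prefactor in the statement) produces the desired inequality.

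The main obstacle is step (i): the smoothing inequality is what ties the pointwise sup of $\Delta$ to its smoothed version, and squeezing out the sharp constant $24/\pi$ requires both the correct kernel and a careful geometric estimate on the interval around $x_0$. A minor technical nuisance is that $\Delta$ need not be in $L^1$ without moment assumptions, so the integration-by-parts in step (ii) should be justified via a truncation or limiting argument, or by interpreting the identity distributionally; these are standard. With the smoothing lemma in hand, the Fourier inversion step is largely mechanical.
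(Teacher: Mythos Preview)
The paper does not supply its own proof of this lemma: it is quoted directly from Feller with a citation, and no argument is given in the text. Your outline correctly reproduces the classical Esseen smoothing argument (Fej\'er-type kernel with compactly supported Fourier transform, smoothing lemma relating $\sup|\Delta|$ to $\sup|\Delta*K_T|$ plus an $O(\|q\|_\infty/T)$ error, then Fourier inversion), which is precisely the derivation in Feller that the paper cites; so there is nothing to compare, and your approach is the standard one.
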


We show next that if the linear least-square error of estimating $X$ from $Y_\gamma$ is small (i.e. close to the minimum mean-squared error), then $X$ must be almost Gaussian in terms of the KS-distance. With this result in hand, we use the I-MMSE relationship  \cite{guo_mutual_2005}  to show that if $I(X;Y_\gamma)$ is close to $C(\gamma)$, then $X$ is also almost Gaussian. This result, in turn, will be applied in the next section to bound $F_I(t,\gamma)$ aways from $C(\gamma)$.

Let $P_Z = \calN(0,1)$,  $\EE{|X|^2}= 1$ and $\EE{X}=0$. Denote the linear least-square error estimator of
$X$ given $Y_\gamma$ by 
   $f_{L}(y) \defined\sqrt{ \gamma}y/(1+\gamma)$,
whose mean-squared error is
\begin{equation*}
   \lse(X|Y_\gamma)\defined \EE{(X-f_{L}(Y_\gamma))^2}=\frac{1}{1+\gamma}.
\end{equation*}

Assume that $\lse(X|Y_\gamma)-\mmse(X|Y_\gamma) \leq \epsilon$. It is well known that $\epsilon=0$ if and only if $X\sim \calN(0,1)$ (see e.g. \cite{guo_estimation_2011}). To develop a finitary version of this result, we ask the following question: If $\epsilon$ is small, how close is $P_X$ to Gaussian? The next lemma provides a quantitative answer.
\begin{lem}
\label{lem:KSMMSE}
    For $\EE{|X|^2}= 1$ and $\EE{X}=0$, if $\lse(X|Y_\gamma)-\mmse(X|Y_\gamma)\leq \epsilon$, then 
    there are absolute constants $a_0$ and $a_1$ such that
    \begin{align}
        \dks(F_X,\mathcal{N}(0,1))\leq& a_0\sqrt{\frac{1}{\gamma
        \log(1/\epsilon)}} +a_1(1+\gamma)\epsilon^{1/4}\sqrt{\gamma\log(1/\epsilon)}.
    \end{align}
\end{lem}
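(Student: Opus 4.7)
The plan is to apply Esseen's smoothing inequality (\prettyref{lem:Esseen}) to bound $\dks(F_X,\mathcal{N}(0,1))$ in terms of the characteristic function difference $\psi(u) \defined \varphi_X(u) - e^{-u^2/2}$ on an interval $[-T,T]$. The main input is a Stein-type Fourier identity that converts the hypothesis $\lse(X|Y_\gamma) - \mmse(X|Y_\gamma)\leq\epsilon$ into a pointwise estimate on $\psi$.

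First I would introduce $g(y) \defined \Expect[X \mid Y_\gamma = y] - \frac{\sqrt{\gamma}}{1+\gamma}y$, so that by the orthogonality principle $\Expect[g(Y_\gamma)^2] = \lse(X|Y_\gamma) - \mmse(X|Y_\gamma) \leq \epsilon$. Next, I would compute the Fourier transform of $g(y)p_{Y_\gamma}(y)$ using $\Expect[X e^{i\omega Y_\gamma}] = -i\varphi_X'(\sqrt{\gamma}\omega)e^{-\omega^2/2}$ (since $X\perp Z$) and $\Expect[Y_\gamma e^{i\omega Y_\gamma}] = -i\varphi_{Y_\gamma}'(\omega)$ together with $\varphi_{Y_\gamma}(\omega) = \varphi_X(\sqrt{\gamma}\omega)e^{-\omega^2/2}$. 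After writing $u = \sqrt{\gamma}\omega$, the cross terms should collapse to
\[
\int g(y)\, p_{Y_\gamma}(y)\, e^{i\omega y}\,dy \;=\; -\frac{i\,e^{-u^2/(2\gamma)}}{1+\gamma}\bigl(\varphi_X'(u) + u\varphi_X(u)\bigr).
\]
Since the standard normal characteristic function satisfies $(e^{-u^2/2})' + u\,e^{-u^2/2}=0$, the bracket equals $\psi'(u)+u\psi(u)$. Applying Cauchy--Schwarz to the left-hand side, together with $\Expect[g(Y_\gamma)^2]\leq\epsilon$, yields the differential inequality
\[
|\psi'(u) + u\psi(u)|\;\leq\;(1+\gamma)\,e^{u^2/(2\gamma)}\sqrt{\epsilon}.
\]

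Then, setting $\xi(u)\defined e^{u^2/2}\psi(u)$, the inequality reads $|\xi'(u)|\leq (1+\gamma)\sqrt{\epsilon}\,e^{u^2(1+\gamma)/(2\gamma)}$ with $\xi(0)=\psi(0)=0$. Integrating from $0$ to $u$ and crudely bounding the integrand by its value at the endpoint gives the pointwise estimate
\[
|\psi(u)|\;\leq\;(1+\gamma)\sqrt{\epsilon}\,|u|\,e^{u^2/(2\gamma)}.
\]
Plugging this into Esseen's inequality with $\|\phi\|_\infty = 1/\sqrt{2\pi}$, I get
\[
\dks(F_X,\mathcal{N}(0,1))\;\lesssim\;(1+\gamma)\sqrt{\epsilon}\,T\,e^{T^2/(2\gamma)} \;+\; \frac{1}{T}.
\]
Choosing $T \asymp \sqrt{\gamma\log(1/\epsilon)}$ so that $e^{T^2/(2\gamma)}\asymp \epsilon^{-1/4}$ balances the two terms and recovers the claimed bound with absolute constants $a_0,a_1$.

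The main obstacle will be the Fourier manipulation in step one: one must verify carefully that the two pieces from $\Expect[Xe^{i\omega Y_\gamma}]$ and from $\omega$-differentiation of $\varphi_{Y_\gamma}$ combine \emph{exactly} into the Stein defect $\varphi_X'(u)+u\varphi_X(u)$, with the correct overall factor $(1+\gamma)^{-1}$; the remaining Gr\"onwall-style integration and the optimization of the Esseen truncation $T$ are routine.
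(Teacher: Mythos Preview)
Your proposal is correct and follows essentially the same route as the paper: the paper also uses the orthogonality identity $\Expect[e^{itY_\gamma}(X-\Expect[X\mid Y_\gamma])]=0$ to derive exactly $\frac{e^{-u^2/(2\gamma)}}{1+\gamma}|\varphi_X'(u)+u\varphi_X(u)|\le\sqrt{\epsilon}$, then integrates the resulting differential inequality via the substitution $\varphi_X(u)=e^{-u^2/2}(1+z(u))$ (your $\xi(u)=z(u)$), and finishes with Esseen's inequality at $T=\sqrt{\frac{\gamma}{2}\log(1/\epsilon)}$. The only cosmetic difference is that the paper bounds $|\Expect[e^{itY_\gamma}\Delta(Y_\gamma)]|\le\Expect|\Delta(Y_\gamma)|\le\sqrt{\epsilon}$ via Jensen rather than phrasing it as Cauchy--Schwarz.
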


%
\begin{proof}
Denote $f_M(y)=\EE{X|Y_\gamma=y}$. Then
\begin{align*}
    \epsilon &\geq \lse(X|Y_\gamma)-\mmse(X|Y_\gamma)\\
    &=\EE{(X-f_{L}(Y_\gamma))^2}-\EE{(X-f_{M}(Y_\gamma))^2}\nonumber \\
                  &= \EE{(f_M(Y_\gamma)-f_L(Y_\gamma))^2}.
\end{align*}
Denote $\Delta(y)\defined f_M(y)-f_L(y)$. Then $\EE{\Delta(Y_\gamma)}=0$ and $\EE{\Delta(Y_\gamma)^2}\leq \epsilon$. From the orthogonality principle:
\begin{align}
    \EE{e^{itY_\gamma}(X-f_M(Y_\gamma))}=0.
\end{align}
Let $\varphi_X$ denote the characteristic function of $X$. Then
\begin{align}
    \EE{e^{itY_\gamma}(X-f_M(Y_\gamma))} =&
    \EE{e^{itY_\gamma}(X-f_L(Y_\gamma)-\Delta(Y_\gamma))} \nonumber \\
 =&\frac{1}{1+\gamma}\left(e^{-t^2/2}\EE{e^{i\sqrt{\gamma}tX}X}-\sqrt{\gamma}\varphi_X(\sqrt{\gamma
 }t)\EE{Ze^{itZ} }\right)-\EE{e^{itY_\gamma}\Delta(Y_\gamma)} \nonumber\\
 =&\frac{-ie^{-u^2/2\gamma}}{1+\gamma}\left(\varphi_X'(u)+u\varphi_X(u)\right)-\EE{e^{itY_\gamma}\Delta(Y_\gamma)},
\end{align}
where the last equality follows by changing variables $u=\sqrt{\gamma}t$. Consequently,
\begin{align}
    \frac{e^{-u^2/2\gamma}}{1+\gamma}\left| \varphi_X'(u)+u\varphi_X(u) \right| &= \left| \EE{e^{itY_\gamma}\Delta(Y_\gamma)} \right|\\
    &\leq \EE{\left|\Delta(Y_\gamma)\right|}\nonumber\\
    &\leq \sqrt{\epsilon}. \label{eq:boundODE}
\end{align}
Put $\varphi_X(u)=e^{-u^2/2}\left(1+z(u) \right)$. Then
\begin{align}
    \left| \varphi_X'(u)+u\varphi_X(u) \right|=e^{-u^2/2}|z'(u)|, \nonumber
\end{align}
and, from \eqref{eq:boundODE},
    $\left|z'(u)\right|\leq
    (1+\gamma)\sqrt{\epsilon}e^{\frac{u^2(\gamma+1)}{2\gamma}}.$ 
Since $z(0)=0$,
\begin{align}
    |z(u)|&\leq \int_0^u |z'(x)|dx
          \leq u(1+\gamma)\sqrt{\epsilon}e^{\frac{u^2(\gamma+1)}{2\gamma}} \label{eq:hubound}.
\end{align}
Observe that $|\varphi_X(u)-e^{-u^2/2 }|=e^{-u^2/2}|z(u)|$. Then, from \eqref{eq:hubound},
\begin{equation}
    \left|\frac{ \varphi_X(u)-e^{-u^2/2}}{ u} \right|\leq (1+\gamma)\sqrt{\epsilon}e^{\frac{u^2}{2\gamma}}.
\end{equation}
Thus, Lemma \ref{lem:Esseen} yields
\begin{align*}
    \dks(F_X,\calN(0,1))&\leq  \frac{1}{\pi}\int^T_{-T} (1+\gamma)\sqrt{\epsilon}e^{\frac{u^2}{2\gamma}} du +\frac{12\sqrt{2}}{\pi^{3/2}T}\\
                         &\leq \frac{2T}{\pi}   (1+\gamma)\sqrt{\epsilon}e^{\frac{T^2}{2\gamma}}+\frac{12\sqrt{2}}{\pi^{3/2}T}.
\end{align*}
Choosing $T=\sqrt{\frac{\gamma}{2}\log(\frac{1}{\epsilon})} $, we find
\begin{align*}
    \dks(F_X,\mathcal{N}(0,1))\leq& a_0\sqrt{\frac{1}{\gamma
    \log(1/\epsilon)}}+a_1(1+\gamma)\epsilon^{1/4}\sqrt{\gamma\log(1/\epsilon)},
\end{align*}
where $a_0=\frac{24}{\pi^{3/2}} $ and $a_1=\frac{\sqrt{2}}{\pi}$.
\end{proof}

Through the I-MMSE relationship \cite{guo_mutual_2005}, the previous lemma can be extended
to bound the KS-distance between the distribution of $X$ and the
Gaussian distribution when $I(X;Y_\gamma)$ is close to $C(\gamma)$.

\begin{lem}
  \label{lem:capKS}
Assume that $\EE{|X|^2}= 1$, $\EE{X}=0$, and $C(\gamma)-I(X;Y_\gamma)\leq \epsilon$. Then, for $\gamma>4\epsilon$,
\begin{align}
    \dks(F_X,\mathcal{N}(0,1))\leq& a_0\sqrt{\frac{2}{\gamma
      \log\left(\frac{\gamma}{4\epsilon}\right)}} +a_1(1+\gamma)(\gamma\epsilon)^{1/4}\sqrt{2\log\left(\frac{\gamma}{4\epsilon}\right)}.
\end{align}
\end{lem}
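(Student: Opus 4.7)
The plan is to reduce the hypothesis $C(\gamma)-I(X;Y_\gamma)\leq \epsilon$ to a hypothesis of the form ``$\lse(X|Y_{s})-\mmse(X|Y_{s})\leq \epsilon'$'' at some specific signal-to-noise level $s$, and then invoke Lemma \ref{lem:KSMMSE}. The bridge between the two gaps is the I-MMSE identity of Guo--Shamai--Verd\'u. Noting that for the standard normal $X^*\sim\calN(0,1)$ we have $\mmse(X^*|Y_s)=\frac{1}{1+s}=\lse(X|Y_s)$ (because $X$ has unit variance and zero mean), and that $C(\gamma)=I(X^*;Y_\gamma)$, the identity gives
\begin{equation*}
C(\gamma)-I(X;Y_\gamma)=\frac{1}{2}\int_0^\gamma \bigl(\lse(X|Y_s)-\mmse(X|Y_s)\bigr)\,\diff s.
\end{equation*}
Thus the integrand is non-negative and its integral is at most $2\epsilon$.

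Next I would localize this smallness on a convenient sub-interval. Restricting integration to $[\gamma/2,\gamma]$ and applying a first-moment (Markov) argument produces some $s_0\in[\gamma/2,\gamma]$ with
\begin{equation*}
\lse(X|Y_{s_0})-\mmse(X|Y_{s_0})\leq \frac{4\epsilon}{\gamma}.
\end{equation*}
Note that the assumption $\gamma>4\epsilon$ is exactly what is needed to guarantee $4\epsilon/\gamma<1$, so that $\log(\gamma/(4\epsilon))$ is positive and Lemma \ref{lem:KSMMSE} can be invoked meaningfully.

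Now apply Lemma \ref{lem:KSMMSE} to the channel $Y_{s_0}=\sqrt{s_0}\,X+Z$, with the role of $\gamma$ played by $s_0$ and the role of $\epsilon$ played by $4\epsilon/\gamma$. Crucially, the quantity being bounded, $\dks(F_X,\calN(0,1))$, depends only on the law of $X$ and not on the SNR, so the bound transfers verbatim to the original setting. This yields
\begin{equation*}
\dks(F_X,\calN(0,1))\leq a_0\sqrt{\frac{1}{s_0\log(\gamma/(4\epsilon))}}+a_1(1+s_0)\Bigl(\frac{4\epsilon}{\gamma}\Bigr)^{1/4}\sqrt{s_0\log(\gamma/(4\epsilon))}.
\end{equation*}
Using $s_0\in[\gamma/2,\gamma]$ to bound $1/s_0\leq 2/\gamma$, $1+s_0\leq 1+\gamma$, and $\sqrt{s_0}\leq\sqrt{\gamma}$ (so that $(4\epsilon/\gamma)^{1/4}\sqrt{s_0}\leq 4^{1/4}(\gamma\epsilon)^{1/4}$) and absorbing the absolute constant $4^{1/4}$ into $a_1$ produces the claimed inequality.

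The step I expect to require the most care is merely the bookkeeping of constants and the justification of the averaging argument (that the integrand is non-negative, which is just the suboptimality of the linear estimator); there is no genuine analytic obstacle beyond what is already encapsulated in Lemma \ref{lem:KSMMSE} and the I-MMSE identity. The conceptual point is simply that mutual information is an integrated MMSE, so a small mutual information gap at SNR $\gamma$ forces a small MMSE gap at \emph{some} intermediate SNR, at which Lemma \ref{lem:KSMMSE} applies.
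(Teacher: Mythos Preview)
Your proposal is correct and follows essentially the same route as the paper: the I-MMSE identity converts the capacity gap into an integrated MMSE gap, an averaging argument on the sub-interval $[\gamma/2,\gamma]$ produces a single SNR $s_0$ with MMSE gap at most $4\epsilon/\gamma$, and Lemma~\ref{lem:KSMMSE} is applied there and then bounded using $s_0\in[\gamma/2,\gamma]$. The only cosmetic difference is that the paper invokes the mean-value theorem for integrals (using continuity of $\mmse$) where you use an averaging/infimum argument, and your ``absorbing $4^{1/4}$ into $a_1$'' is in fact unnecessary since $4^{1/4}=\sqrt{2}$ matches the stated bound exactly.
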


\begin{proof}
From the I-MMSE relationship \cite{guo_mutual_2005}:
\begin{align}
    C(P)-I(X;Y_P) = \frac{1}{2}\int_0^P \frac{1}{1+\gamma}-\mmse(X|Y_\gamma)
    d\gamma \leq \epsilon.
\end{align}
Since $\mmse(X|Y_\gamma)\leq \frac{1}{1+\gamma}$, for any $\delta \in [0,P)$
\begin{align}
   \frac{1}{\delta} \int_{P-\delta}^P \frac{1}{1+\gamma}-\mmse(X|Y_\gamma)
   d\gamma \leq \frac{2\epsilon}{\delta}.
\end{align}
The function $\mmse(X|Y_\gamma)$ is continuous in $\gamma$. Then, from the
mean-value theorem for integrals, there exists $\gamma^* \in (P-\delta,P)$ such
that
\begin{equation}
  \frac{1}{1+\gamma^*}-\mmse(X|Y_{\gamma^*})\leq \frac{2\epsilon}{\delta}. 
\end{equation}
From Lemma \ref{lem:KSMMSE}, we find
\begin{align*}
     d_{KS}(F_X,\mathcal{N}(0,1))\leq& a_0\sqrt{\frac{1}{\gamma^*
     \log(\delta/2\epsilon)}}+a_1(1+\gamma^*)\left(\frac{2\epsilon}{\delta}\right)^{1/4}\sqrt{\gamma^*\log(\delta/2\epsilon)}\\
     \leq& a_0\sqrt{\frac{1}{(P-\delta) \log(\delta/2\epsilon)}}+a_1(1+P)\left(\frac{2\epsilon}{\delta}\right)^{1/4}\sqrt{P\log(\delta/2\epsilon)}.
\end{align*}
The desired result is found by choosing $\delta = P/2$.
%
\end{proof}


\begin{remark}
Note that the gap between the linear and nonlinear MMSE can be expressed as the Fisher distance between the convolutions, i.e.,
$\lse(X|Y_\gamma) - \mmse(X|Y_\gamma) = I(P_X * N(0,1) \| N(0,1+\gamma))$, where $I(P\|Q) = \int [(\log \fracd{P}{Q})']^2 \diff P$ is the Fisher distance. Similarly, $C(\gamma)-I(X;Y_\gamma) = D(P_X*N(0,1)\|N(0,1+\gamma))$. Therefore Lemma~\ref{lem:KSMMSE} (resp., Lemma~\ref{lem:capKS}) can be interpreted as a deconvolution result, where bounds on a stronger Fisher (resp.~KL) distance between the convolutions lead to bounds on the distance between the original distributions under a weaker (KS) metric. Recall also that Gross's log-Sobolev inequality bounds KL in terms of Fisher distance.
  \label{remark:deconv}
\end{remark}



\section{Horizontal bound for Gaussian channels}
\label{sec:horizontal}

Using the results from the previous section, we show that, for $P_Z\sim \calN(0,1)$, $F_I(t,\gamma)$ is bounded away from the capacity $C(\gamma)$ for all $t$.
 \begin{thm}
   \label{thm:capBound}
For the AWGN channel with quadratic constraint, see~\eqref{eq:figamma}, we have $F_I(t,\gamma)=  C(\gamma) -
g_h(t,\gamma)$ and 
	$$ g_h(t,\gamma)\geq e^{-c_1(\gamma)e^{4t}}\,,$$
where $c_1(\gamma)$ is some positive constant depending on $\gamma$.
 \end{thm}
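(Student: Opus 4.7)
The plan is to take the contrapositive: assume $W\to X\to Y_\gamma$ satisfies $I(W;X)\le t$ and $I(W;Y_\gamma)\ge C(\gamma)-\epsilon$, and derive $\epsilon\ge \exp(-c_1(\gamma)e^{4t})$, equivalently $t\gtrsim \tfrac14\log\log(1/\epsilon)$ for small $\epsilon$ at fixed $\gamma$.

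The key input is Lemma~\ref{lem:capKS}. By data processing, $I(X;Y_\gamma)\ge I(W;Y_\gamma)\ge C(\gamma)-\epsilon$, so Lemma~\ref{lem:capKS} immediately yields that the marginal $P_X$ is near-Gaussian in Kolmogorov--Smirnov distance,
\[
\dks(F_X,\Phi)\le \delta(\epsilon,\gamma),
\]
with the dominant term for small $\epsilon$ scaling like $\delta(\epsilon,\gamma)\asymp 1/\sqrt{\gamma\log(1/\epsilon)}$. The remainder of the proof consists of producing a complementary \emph{lower} bound of the form $\dks(F_X,\Phi)\gtrsim e^{-2t}$ (with a $\gamma$-dependent constant) under the constraints $I(W;X)\le t$ and $\mathbb{E}[X^2]\le 1$. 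Sandwiching the two bounds yields $e^{-2t}\lesssim 1/\sqrt{\log(1/\epsilon)}$, hence $\log(1/\epsilon)\lesssim e^{4t}$, which rearranges to the claim with all $\gamma$-dependence absorbed into $c_1(\gamma)$.

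For the $\dks$-lower bound I would combine the chain-rule identity $I(X;Y_\gamma|W)=I(X;Y_\gamma)-I(W;Y_\gamma)\le \epsilon$ with the conditional entropy power inequality $I(X;Y_\gamma|W)\ge \tfrac12\log(1+\gamma N(X|W))$, where $N(X|W)=e^{2h(X|W)}/(2\pi e)$ denotes the conditional entropy power. This forces $h(X|W)\le \tfrac12\log(4\pi e\epsilon/\gamma)$ for small $\epsilon$; combining with $h(X)\le h(X|W)+I(W;X)\le h(X|W)+t$ gives $h(X)\le \tfrac12\log(4\pi e\epsilon e^{2t}/\gamma)$, which is very negative. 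Under the second-moment constraint $\mathbb{E}[X^2]\le 1$, any distribution with such strongly-negative differential entropy must concentrate on an effective support of cardinality of order $e^t$, and a pigeon-hole argument (any $n$-atom distribution has $\dks\ge 1/(2n)$ to any continuous CDF, strengthened by the variance constraint) produces $\dks(F_X,\Phi)\gtrsim e^{-2t}$, giving the $e^{4t}$ factor in the final bound.

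The main obstacle is that the conditional EPI step is degenerate when $X$ has no density, which is precisely the atomic regime relevant to the $\dks$-lower bound. To circumvent this I would use a convolutional smoothing: replace $X$ by $\tilde X=X+\mu N$ with $N\sim \mathcal{N}(0,1)$ independent and $\mu>0$ small, noting $I(W;\tilde X)\le I(W;X)\le t$ by data processing through the chain $W\to X\to \tilde X$, and that $\tilde X$ has density with $h(\tilde X)\ge \tfrac12\log(2\pi e\mu^2)>-\infty$. Upon conditioning on the smoothing noise $N$, the effective channel $\tilde X\to Y_\gamma$ is AWGN at SNR $\gamma'=\gamma/(1+\gamma\mu^2)$, so the EPI argument applies cleanly to $\tilde X$ in place of $X$; one then transfers the $\dks$-estimate from $\tilde X$ back to $X$ up to an $O(\mu)$ additive perturbation, and optimizes $\mu$ in terms of $\gamma$, $t$, and $\epsilon$ to absorb all $\gamma$-dependence into $c_1(\gamma)$. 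Verifying that this smoothing preserves Lemma~\ref{lem:capKS}'s hypothesis for $\tilde X$ and that the resulting constants remain positive is where most of the technical care would be spent.
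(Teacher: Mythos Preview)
Your argument has a genuine gap at the step claiming that a strongly negative differential entropy $h(X)$ forces $\dks(F_X,\Phi)\gtrsim e^{-2t}$. This implication is simply false, even under the second-moment constraint. Consider the mixture density
\[
p_X(x) = (1-\delta)\,\phi(x) + \frac{\delta}{\eta}\,\mathbf{1}_{[0,\eta]}(x),
\]
with $\phi$ the standard normal density. For any fixed $\delta\in(0,1)$ one has $\dks(F_X,\Phi)\le \delta$, yet as $\eta\to 0$ the spike drives $h(X)\to -\infty$ while $\mathbb{E}[X^2]\le 1$. Thus an arbitrarily negative $h(X)$ is compatible with arbitrarily small KS distance to the Gaussian, and no ``effective support of cardinality $e^t$'' can be extracted from the entropy bound alone. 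The pigeon-hole analogy with $n$-atom distributions breaks down precisely because a small-mass, high-density spike is enough to make $h(X)$ very negative without moving the CDF appreciably. Your smoothing step addresses a different concern (validity of the EPI when $X$ is atomic) and does not rescue this implication, since the counterexample above already has a density.

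The paper's proof avoids passing through $h(X)$. Instead it works directly with the decomposition $I(W;X)=\mathbb{E}_W\bigl[D(P_{X|W}\,\|\,P_X)\bigr]$ and establishes the two ingredients separately: (i) from $I(X;Y_\gamma|W)\le\epsilon$ and Markov's inequality, a deconvolution lemma (Lemma~\ref{lem:concentration}) shows that on a set of $w$'s of $P_W$-probability at least $1/2$ the conditional $P_{X|W=w}$ puts mass $\ge 1-O(\epsilon^{1/8})$ on a ball of radius $O(\epsilon^{1/8}/\sqrt{\gamma})$; (ii) from Lemma~\ref{lem:capKS}, $P_X$ assigns at most $\kappa(\gamma)/\sqrt{\log(1/\epsilon)}$ to any such ball. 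These two facts yield, for each such $w$,
\[
D(P_{X|W=w}\,\|\,P_X)\ \ge\ \tfrac12\log\log\tfrac1\epsilon\ -\ O_\gamma(1),
\]
by reducing to a binary divergence via data processing. Averaging over $w$ gives $I(W;X)\ge \tfrac14\log\log(1/\epsilon)-O_\gamma(1)$, which is the desired \prettyref{eq:lnln}. The essential point is that the paper keeps track of \emph{where} $P_{X|W=w}$ concentrates (a location depending on $w$) and pairs this against the anti-concentration of $P_X$; your route through $h(X)\le h(X|W)+t$ discards exactly this positional information, which is why it cannot close.
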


We first give an auxiliary lemma.
\begin{lem}
  \label{lem:concentration}
    If $D(\calN(0,1)\|P_X*\calN(0,1))\leq 2 \epsilon$ for $\epsilon\leq 1$, then there exists an
    absolute constant $a_2>0$ such that
    \begin{equation}
      \Pr[ |X|> \epsilon^{1/8} ]\leq a_2\epsilon^{1/8}.
    \end{equation}
\end{lem}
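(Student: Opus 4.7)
My plan is a short three-step argument: (i) use Pinsker to pass from the KL hypothesis to a total-variation bound, (ii) test against a carefully chosen bounded function to convert this into a lower bound on a specific moment of $X$, and (iii) apply a one-parameter Markov argument to extract the tail bound.

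For (i), Pinsker's inequality applied to the hypothesis $D(\calN(0,1)\|P_X * \calN(0,1))\leq 2\epsilon$ gives $\TV(\calN(0,1), P_X * \calN(0,1))\leq \sqrt{\epsilon}$. For (ii), I would test against $\phi(y)\defined e^{-y^2/2}$, which takes values in $[0,1]$. A standard Gaussian integral gives $\EE{\phi(Z)} = 1/\sqrt{2}$ for $Z \sim \calN(0,1)$, and completing the square in the integral $\int e^{-(x+z)^2/2} f_Z(z)\, \diff z$ gives $\EE{\phi(x+Z)} = e^{-x^2/4}/\sqrt{2}$ for each fixed $x$; integrating against $P_X$ yields $\EE{\phi(Y)} = \EE{e^{-X^2/4}}/\sqrt{2}$ where $Y=X+Z$. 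Since $\phi\in[0,1]$, the TV bound gives $|\EE{\phi(Z)} - \EE{\phi(Y)}| \leq \sqrt{\epsilon}$, which rearranges to
\[
\EE{e^{-X^2/4}} \geq 1 - \sqrt{2\epsilon}.
\]

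For (iii), use the pointwise inequality $e^{-X^2/4} \leq \mathbf{1}\{|X|\leq t\} + e^{-t^2/4}\mathbf{1}\{|X|>t\}$, valid for any $t > 0$. Taking expectations and rearranging,
\[
\Pr[|X| > t] \leq \frac{1 - \EE{e^{-X^2/4}}}{1 - e^{-t^2/4}} \leq \frac{\sqrt{2\epsilon}}{1 - e^{-t^2/4}}.
\]
Setting $t = \epsilon^{1/8}$ and using $1 - e^{-x} \geq x/2$ for $x \in [0,1]$ (applied to $x = \epsilon^{1/4}/4$, which lies in $[0,1/4]$ because $\epsilon \leq 1$) yields $\Pr[|X| > \epsilon^{1/8}] \leq 8\sqrt{2}\,\epsilon^{1/4}$, and since $\epsilon^{1/4}\leq \epsilon^{1/8}$ whenever $\epsilon \leq 1$, the conclusion holds with $a_2 = 8\sqrt{2}$.

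The only genuinely non-mechanical choice is the test function $\phi$: it must be bounded so as to pair with the coarse TV bound coming from Pinsker, and it must convolve cleanly with $P_Z$ so that $\EE{\phi(X+Z)}$ becomes a transparent functional of $X$ that witnesses concentration near $0$; both requirements essentially force the Gaussian bump $e^{-y^2/2}$. I do not foresee a serious obstacle in the calculation, since the exponent $\epsilon^{1/8}$ in the statement is very generous relative to the natural scale $\sqrt{\epsilon}/t^2$ produced by the Markov argument --- in fact the proof above comfortably delivers $\Pr[|X|>\epsilon^{1/8}]\lesssim \epsilon^{1/4}$.
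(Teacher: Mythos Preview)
Your proof is correct and in fact cleaner than the paper's. Both arguments start identically with Pinsker to obtain $\TV(\calN(0,1),P_X*\calN(0,1))\le\sqrt{\epsilon}$, but then diverge in the choice of test function. The paper tests against the indicator $\ones_{B(0,\delta)}$, conditions on $\{|X|\le 3\delta\}$ versus its complement, and needs a hands-on lower bound $\Pr[Z\in B(0,\delta)]-\Pr[Z\in B(3\delta,\delta)]\ge \tfrac14\delta^3$ on the drop in Gaussian mass under a shift; this yields $\Pr[|X|>3\delta]\le 4\delta^{-3}\sqrt{\epsilon}$ and, with $\delta=\epsilon^{1/8}/3$, the constant $a_2=108$. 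You instead test against the smooth bump $\phi(y)=e^{-y^2/2}$, which convolves exactly with the Gaussian kernel to give $\EE{\phi(X+Z)}=\tfrac{1}{\sqrt 2}\,\EE{e^{-X^2/4}}$, turning the TV bound directly into a moment bound and letting a one-line Markov argument finish. The payoff is a sharper conclusion ($\Pr[|X|>\epsilon^{1/8}]\le 8\sqrt 2\,\epsilon^{1/4}$ rather than $\lesssim\epsilon^{1/8}$) and a substantially smaller constant, at the cost of exploiting a Gaussian-specific identity; the paper's indicator argument is cruder but more readily adapted to other noise densities, as is done later in \prettyref{lem:DeconvGen}.
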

\begin{proof}
 Let $Z\sim \calN(0,1)\independent X$, and $B(x,\delta)\defined [x-\delta,x+\delta]$. For any $\delta\in (0,1)$, Pinsker's
 inequality yields
   \begin{align*}
       \Pr[Z\in B(0,\delta)]-\Pr[Z+X\in B(0,\delta)]&\leq
       \TV(P_Z,P_{Z+X})\\
       &\leq \sqrt{\epsilon}.
   \end{align*}
 Observe that 
    \begin{align*}
        \Pr[Z+X\in B(0,\delta)]= &\Pr\left[Z\in B(-X,\delta)\mid |X|\leq
        3\delta\right]\Pr[|X|<3\delta]+\Pr\left[Z\in B(-X,\delta)\mid
        |X|> 3\delta\right]\Pr[|X|>3\delta]\\ 
        \leq& \Pr\left[Z\in
        B(0,\delta)\right]\Pr[|X|\leq 3\delta]+\Pr\left[Z\in
        B(3\delta,\delta)\right]\Pr[|X|>3\delta]\\
        =&\Pr\left[ |X|>3\delta
        \right]\left( \Pr\left[ Z\in B(3\delta,\delta)-\Pr[Z\in B(0,\delta)]
        \right] \right)+ \Pr\left[ Z\in B(0,\delta) \right].
    \end{align*}
 Consequently,
 \begin{align*}
    \Pr\left[ |X|>3\delta
        \right]\left( \Pr[Z\in B(0,\delta)]-\Pr\left[ Z\in B(3\delta,\delta)
        \right] \right)\leq \sqrt{\epsilon}
 \end{align*}
 Let $\phi(u)\defined\frac{1}{\sqrt{2\pi}}e^{-u^2/2}$ be the Gaussian probability density function. For $\delta\leq 1/2$
 \begin{align*}
     \Pr[Z\in B(0,\delta)]-\Pr\left[ Z\in B(3\delta,\delta)
        \right]& \geq 2\delta (\phi(\delta)-\phi(2\delta))\\
        &\geq \frac{1}{4}\delta^3,
  \end{align*}
 where the last inequality follows from the fact that the mapping $u\mapsto e^{-u^2/2}-e^{-(2u)^2/2}$ is lower-bounded by $u^2/2$ for $\delta\leq 1/2$. Then
 \begin{align}
   \Pr[|X|>3\delta]\leq 4\delta^{-3}\epsilon 
 \end{align}
 The result follows by choosing $\delta = \frac{\epsilon^{1/8}}{3}$ with
 constant $a_2=108$.
 \end{proof}

\begin{proof}[Proof of \prettyref{thm:capBound}]
We will show an equivalent statement: If $t>0$ is such that $C(\gamma)-F_I(t,\gamma)\leq \epsilon$ then 
   \begin{equation}
   \label{eq:lnln}
   t \geq \frac{1}{4}\log\log\frac{1}{\epsilon} -  \log c_1(\gamma).
 \end{equation}
We assume without loss of generality that $\EE{|X|^2}=1$,  $\EE{X}=0$, and $\epsilon<1$. If $\EE{|X|^2}= \sigma^2>0$, the following derivation holds by appropriately scaling the parameter $\gamma$, i.e. replacing $\gamma$ by $\gamma' = \sigma^2\gamma$, without changing the asymptotic scaling of the results.  
    Observe that from the saddle-point property of the AWGN
    \begin{align}
        I(W;Y_\gamma) =& I(X;Y_\gamma)-I(X;Y_\gamma|W) \nonumber\\
               \leq& C(\gamma)-D(P_{\sqrt{\gamma}X}*\calN(0,1)\|\calN(0,1+\gamma))
              -I(X;Y_\gamma|W).
               \label{eq:gap}
    \end{align}
Therefore, if $I(W;Y_\gamma)$ is close to $C(\gamma)$, then (a) $P_X$ needs to be Gaussian
like, and (b) $P_{X|W}$ needs to be almost deterministic with high
$P_W$-probability. Consequently, $P_{X|W}$ and $P_X$ are close to being mutually
singular and hence $I(W;X)$ will be large, since
\begin{equation*}
    I(W;X)=D(P_{X|W}\|P_X|P_W).
\end{equation*}

Let $\Xt\defined
\sqrt{\gamma}X$ and then $W \to \Xt \to Y_\gamma$. Define
\begin{align}
d(x,w)
\triangleq & ~ D(P_{Y_\gamma|\Xt=x}\|P_{Y_\gamma|W=w}) 	\nonumber \\
= & ~ 	D(\calN(x,1)\|P_{\Xt|W=w}*\calN(0,1)). 	\label{eq:dxw}
\end{align}
Then $(x,w) \mapsto d(x,w)$ is jointly measurable\footnote{By definition of the Markov kernel, both $x \mapsto P_{Y_{\gamma}\in A|\Xt=x}$ and $w \mapsto P_{Y_{\gamma}\in A|W=w}$ are measurable for any measurable subset $A$. 
Let $[y]_k \triangleq \floor{k y}/k$ denote the uniform quantizer.
By the data processing inequality and the lower semicontinuity of divergence, we have 
$D(P_{[Y_{\gamma}]_k|\Xt=x}\|P_{[Y_{\gamma}]_k|W=w}) \to D(P_{Y_{\gamma}|\Xt=x}\|P_{Y_{\gamma}|W=w})$ as $k\to\infty$. Therefore the joint measurability of $(x,w)\mapsto D(P_{Y_{\gamma}|\Xt=x}\|P_{Y_{\gamma}|W=w})$ follows from that of $(x,w)\mapsto D(P_{[Y_{\gamma}]_k|\Xt=x}\|P_{[Y_{\gamma}]_k|W=w})$.}
and $I(X;Y_\gamma|W)=\Expect[d(\Xt,W)]$.
Similarly, $w \mapsto \tau(w) \triangleq D(P_{X|W=w}\|P_{X})$ is measurable and $I(X;W)=\Expect[\tau(W)]$.
Since $ \epsilon \geq I(X;Y_\gamma|W) $ in view of \prettyref{eq:gap}, we have
\begin{align} 
\epsilon &\geq \mathbb{E}[d(\Xt,W)] \geq 2\epsilon\cdot\Pr[ d(\Xt,W)\geq 2 \epsilon ]. 
\end{align}
Therefore 
\begin{align}
  \Pr[ d(\Xt,W) < 2 \epsilon ] > \frac{1}{2}. \label{eq:probB}
\end{align}
In view of Lemma
\ref{lem:concentration}, if $d(x,w) < 2 \epsilon$, then
\begin{align*}
   \Pr[ \tilde{X}\in
  B(x,\epsilon^{1/8})|W=w ]= \Pr\left[ X\in
  B\left(\frac{x}{\sqrt{\gamma}},\frac{\epsilon^{1/8}}{\sqrt{\gamma}}\right)\middle| W=w \right] \geq 1-a_2\epsilon^{1/8}.
\end{align*}
Therefore, with probability at least $1/2$, $\Xt$ and, consequently, $X$ is
concentrated on a small ball.  
Furthermore, Lemma \ref{lem:capKS} implies
that there exist absolute constants $a_3$ and $a_4$ such that
\begin{align*}
  \Pr\left[ X\in
  B\left(\frac{x}{\sqrt{\gamma}},\frac{\epsilon^{1/8}}{\sqrt{\gamma}}\right)  \right]&\leq
  \Pr\left[ Z\in
  B\left(\frac{x}{\sqrt{\gamma}},\frac{\epsilon^{1/8}}{\sqrt{\gamma}}\right)
\right]+2\dks(F_X,\calN(0,1))\\
&\leq  \frac{\sqrt{2}\epsilon^{1/8}}{\sqrt{\pi\gamma}}+ a_3\sqrt{\frac{1}{\gamma
  \log\left(\frac{\gamma}{4\epsilon}\right)}}+a_4(1+\gamma)(\gamma\epsilon)^{1/4}\sqrt{\log\left(\frac{\gamma}{4\epsilon}\right)}\\
  &\leq \kappa(\gamma)\left( \log\frac{1}{\epsilon} \right)^{-1/2},
\end{align*}
where $\kappa(\gamma)$ is some positive constant depending only on $\gamma$. 
Therefore, for 
$\epsilon$ sufficiently small, denoting $E= B(\frac{x}{\sqrt{\gamma}},\frac{\epsilon^{1/8}}{\sqrt{\gamma}})$, we have by data processing inequality:
for any $w$ in the support of $W$,
\begin{align}
\tau(w)&=  D(P_{X|W=w}\|P_X)\\
&\geq P_{X|W=w}(E)\log \frac{ P_{X|W=w}(E)}{ P_{X}(E)}+  P_{X|W=w}(E^{c})\log \frac{ P_{X|W=w}(E^{c})}{ P_{X}(E^{c})}
 \nonumber \\
  &\geq \frac{1}{2}\log\log\frac{1}{\epsilon} -\log \kappa(\gamma)-a_5, \label{eq:explode}
\end{align}
where $a_5$ is an absolute positive constant. Combining
\eqref{eq:explode} with \eqref{eq:probB} and letting  $c_1^2(\gamma)\defined
e^{a_5}\kappa(\gamma)$, we obtain
\begin{equation}
\label{eq:ExplodeGauss}
\Pr\Big[\tau(W) \geq \frac{1}{2}\log\log\frac{1}{\epsilon} - 2 \log c_1(\gamma)\Big]
\geq \Pr[d(\tX,W) < 2 \epsilon] \geq \frac{1}{2},
\end{equation}
which implies that $I(W;X)=\Expect[\tau(W)] \geq  \frac{1}{4}\log\log\frac{1}{\epsilon} - \log c_1(\gamma)$,
proving the desired \prettyref{eq:lnln}.
\end{proof}

\begin{remark}
  \label{remark:cap}
The double-exponential convergence rate in Theorem~\ref{thm:capBound} is in fact sharp. To see this, note that \cite[Theorem 8]{wu_impact_2010} showed that there exists a sequence of zero-mean and unit-variance random variables $X_m$ with $m$ atoms,  such that
  \begin{equation}
    C(\gamma)-I(X_m;\sqrt{\gamma} X_m + Z)\leq 4(1+\gamma)\left( \frac{\gamma}{1+\gamma} \right)^{2m}.
  \end{equation}
  Consequently,
  \begin{align*}
    C(\gamma)-F_I(t,\gamma)&\leq C(\gamma)-F_I(\log \lfloor e^t
    \rfloor,\gamma)\\
    &\leq 4(1+\gamma)\left( \frac{\gamma}{1+\gamma} \right)^{2(e^t-1)}\\
    &= e^{-2e^t\log \frac{1+\gamma}{\gamma} +O(\log \gamma) },
  \end{align*}
  proving the right-hand side of \eqref{eq:bound_gh}.
\end{remark}


\section{Deconvolution results for total variation}
\label{sec:deconvTV}

The proof of the horizontal gap for the scalar AWGN channel in \prettyref{sec:horizontal} consists of four steps:
\begin{enumerate}[(a)]
	\item Notice that if $C(\gamma)-I(W;Y_\gamma)$ is small, then both $X$ is Gaussian-like and $P_X$ and $P_{X|W}$ are close to being mutually singular;
	\item Use Lemma \ref{lem:capKS} to show that $ P_X$ cannot be concentrated on any ball of small radius if it is Gaussian-like;
	\item Apply Lemma \ref{lem:concentration} to show that $P_{X|W}$, in turn, is  concentrated on a small ball with high $W$-probability;
	\item Use \eqref{eq:ExplodeGauss} to show that $I(W;X)$ must explode.
\end{enumerate}

In \prettyref{sec:GeneralHorizontal}, we will implement the above program to extend the results in Theorem \ref{thm:capBound} (i.e. $I(W;Y)$ approaches capacity only as $I(W;X)\to \infty$) for a  range of noise distributions.  We also generalize the moment constraint on the input distribution, allowing $P_X$ to be restricted to an  arbitrary convex set. However, the extension of the AWGN result to a wider class of noise distributions requires new deconvolution results that are similar in spirit to Lemmas \ref{lem:capKS} and   \ref{lem:concentration}. These results are the focus of the present section.

If $\calP$ is convex and $C(\calP)\defined \sup_{P_X\in \calP} I(X;Y)<\infty$, then there exists a unique capacity-achieving output distribution $P_{Y^*}$ \cite{kemperman1974shannon}. In addition, by the saddle-point characterization of capacity, $$ C(\calP) = \sup_{P_X\in \calP} D(P_{Y|X}\| P_Y^*|P_X).$$ Consequently, for any $P_X\in \calP$, we can  decompose
\begin{equation}
	\label{eq:IDecomp}
  I(W;Y)=I(X;Y)-I(X;Y|W)\leq C(\calP)-D(P_Y\| P_Y^*)-I(X;Y|W).
\end{equation}
If the capacity-achieving input distribution $P_{X^*}$ is unique, then the same intuition for the Gaussian case should hold: (i) $P_X$ must be close to the capacity achieving input distribution $P_{X^*}$ and (ii) $P_{X|W}$ must be concentrated on a small ball with high probability. Therefore, as long as $P_{X^*}$ is assumed to have no atoms, then $P_{X|W}$ and $P_X$  are close to being mutually singular, which, in view of the fact that 
\begin{equation}
\label{eq:IWX}
I(W;X)=D(P_{X|W}\| P_X|P_W),
\end{equation}
implies that $I(W;X)$ will explode.

In order to make this proof concrete, we require additional results to quantify the distance between $P_X$ and $P_{X^*}$ (analogous to Lemma \ref{lem:capKS} in the Gaussian case), and to show that $ P_{X|W} $ is concentrated in a small ball (analogous to Lemma \ref{lem:concentration}) for general $P_Z$. These are precisely the results we present in this section, once again making use of Lemma \ref{lem:Esseen} and Fourier-analytic tools. In particular, we prove a deconvolution result in terms of total variation for a wide range of additive noise distributions $P_Z$ (e.g. Gaussian, uniform). The main result in this section (\prettyref{thm:KS_TV} and \prettyref{cor:deconv-phiz}) states that, under first moment constraints and certain conditions on the characteristic function of $P_Z$ (\eg, no zeros, cf. Lemma \ref{lem:DeconvGen}), if $\TV(P*P_Z,Q*P_Z)$ is small and $Q$ has a bounded density, then $\dks(P,Q)$ is also small.

Let $v:\Reals\to \Reals$ be the positive, symmetric function
\begin{equation}
    \label{eq:vdef}
    v(x)\defined \frac{2(1-\cos x)}{x^2}
\end{equation}
and $\hat{v}$ its Fourier transform
\begin{equation}
\label{eq:vhat}
  \hat{v}(\omega) \defined \int v(x)e^{i\omega x} dx = 2\pi\left( 1-|\omega|
  \right)^+,
\end{equation}
where $(x)^+ \triangleq \max\{x,0\}$. One of the motivations behind introducing the function $v$ in \eqref{eq:vdef} is that it enables the tail of any real-valued random variable $X$ to be bounded by its characteristic function (cf. the proof of \cite[Lemma 4.1]{kallenberg_foundations_1997}). 

We have the following deconvolution lemma.
\begin{lem}
\label{lem:DeconvGen}
    Assume $P_Z$ has density bounded by $m_1$ and that there exists a decreasing
  function $g_1:(0,1]\to \Reals^+$ with $g_1(0+)=\infty$ such that
  \begin{equation}
  \label{eq:condZ}
    \mathrm{Leb}\left\{ \omega: |\cf_Z(\omega)|\leq \sqrt{u},|\omega|\leq
  g_1(u) \right\}\leq \sqrt{g_1(u)} \qquad \forall u\in (0,1].
  \end{equation}
  Then for all distributions $P,Q$ and all $x_0\in \Reals$:
  \begin{equation}
    \label{eq:vPQ}
    \left|\ExpVal{P}{v(TX-x_0)}-\ExpVal{Q}{v(TX-x_0)} \right|\le {c\over
      \sqrt{T}},\qquad T=g_1\left( m_1\TV(P*P_Z,Q*P_Z) \right),
  \end{equation}
  where $c$ is an absolute constant.
\end{lem}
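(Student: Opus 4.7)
The plan is to reduce the claim to a bound on a Fourier integral, then deconvolve by $\cf_Z$, and finally balance two regimes of the frequency variable according to whether $|\cf_Z|$ is large or small. First I would apply Fourier inversion to $v$: since $\hat v(\omega)=2\pi(1-|\omega|)^+$ is supported on $[-1,1]$, substituting into $\ExpVal{P}{v(TX-x_0)}$ and changing variables $\xi=-T\omega$ yields the representation
\[
\ExpVal{P}{v(TX-x_0)}-\ExpVal{Q}{v(TX-x_0)}
=\frac{1}{T}\int_{-T}^{T}\left(1-\frac{|\xi|}{T}\right)e^{-i\xi x_0/T}\bigl(\cf_P(\xi)-\cf_Q(\xi)\bigr)\,d\xi,
\]
so the task reduces to controlling a truncated integral of $\cf_P-\cf_Q$ against the triangular kernel $1-|\xi|/T$ on $[-T,T]$. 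Wherever $\cf_Z(\xi)\neq 0$ I would deconvolve via $\cf_P-\cf_Q=(\cf_{P*P_Z}-\cf_{Q*P_Z})/\cf_Z$.

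Writing $\epsilon\defined\TV(P*P_Z,Q*P_Z)$ and $u\defined m_1\epsilon$, I would split $[-T,T]=A\cup B$ according to $|\cf_Z(\xi)|>\sqrt{u}$ versus $|\cf_Z(\xi)|\le\sqrt{u}$. Taking $T=g_1(u)$ as prescribed, hypothesis \eqref{eq:condZ} gives $|B|\le\sqrt{g_1(u)}=\sqrt{T}$, and the trivial bound $|\cf_P-\cf_Q|\le 2$ yields a $B$-contribution of at most $2|B|/T\le 2/\sqrt{T}$.

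The heart of the argument is the $A$-integral. A naive pointwise bound $|\cf_{P*P_Z}-\cf_{Q*P_Z}|\le 2\epsilon$ together with $|\cf_Z|>\sqrt{m_1\epsilon}$ only gives $|\cf_P-\cf_Q|\le 2\sqrt{\epsilon/m_1}$ on $A$, which integrated against the kernel over $[-T,T]$ yields $O(\sqrt{\epsilon/m_1})$, not in general dominated by $1/\sqrt{T}$. The right substitute is an $L^2$ estimate via Plancherel: the density of $P*P_Z$ is bounded by $m_1$, so $\|p*p_Z-q*p_Z\|_\infty\le 2m_1$ and $\|p*p_Z-q*p_Z\|_1=2\epsilon$, whence $\|p*p_Z-q*p_Z\|_2^2\le 4m_1\epsilon$, and Plancherel gives $\|\cf_{P*P_Z}-\cf_{Q*P_Z}\|_{L^2}^2\le 8\pi m_1\epsilon$. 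Dividing by $|\cf_Z|^2>u=m_1\epsilon$ on $A$ yields $\int_A|\cf_P-\cf_Q|^2\,d\xi\le 8\pi$, and Cauchy--Schwarz against the triangular kernel (whose $L^2$ mass over $[-T,T]$ is $O(\sqrt{T})$) delivers an $A$-contribution of order $1/\sqrt{T}$.

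The main obstacle is precisely this upgrade from $L^\infty$ to $L^2$ control of the deconvolved difference: exploiting the density bound $m_1$ to pass from pointwise to mean-square estimates is what lets the factor of $\epsilon$ in $\|\cf_{P*P_Z}-\cf_{Q*P_Z}\|_{L^2}^2$ cancel against the lower bound $|\cf_Z|^2>m_1\epsilon$ on $A$, producing an absolute constant. Combining the $A$- and $B$-bounds then gives the claimed $c/\sqrt{T}$ estimate.
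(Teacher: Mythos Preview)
Your proposal is correct and follows essentially the same route as the paper: Fourier representation via the triangular kernel $\hat v$, the $L^2$ bound $\|(\cf_P-\cf_Q)\cf_Z\|_2^2\lesssim m_1\TV(P*P_Z,Q*P_Z)$ obtained from Plancherel together with the $L^1$--$L^\infty$ interpolation $\|f\|_2^2\le\|f\|_1\|f\|_\infty$, and the good-set/bad-set split according to the size of $|\cf_Z|$. The only cosmetic difference is that the paper applies Cauchy--Schwarz to $\int_{\calD^c}|\cf_P-\cf_Q|\cdot|\cf_Z|\,d\omega$ (pairing with the constant $1$) and then divides by the lower bound on $|\cf_Z|$, whereas you first divide to bound $\int_A|\cf_P-\cf_Q|^2\le 8\pi$ and then pair with the triangular kernel; the resulting constants are equivalent.
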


\begin{remark}
\label{rmk:g1}
\begin{enumerate}[1.]
\item The implication of the previous lemma is that $P$ and $Q$ are almost the same on all balls of size approximately $\frac{1}{T}$.

\item For Gaussian $P_Z$, $g_1(u)=\sqrt{-\log u}$. For uniform $P_Z$, $g_1(u)=u^{-1/3}$.

\item Without assumptions similar to those of \prettyref{lem:DeconvGen}, it is impossible to have any deconvolution inequality. For example, if $\cf_Z=0$ outside of a neighborhood of 0 (e.g. $p_Z$ is proportional to \eqref{eq:vdef}), then one may have $P*P_Z=Q*P_Z$, but $P\neq Q$.
\end{enumerate}
\end{remark}

\begin{proof}
Denote the density of $Z$ by $p_Z$.
   From Plancherel's theorem, we have
   \begin{align}
     \|(\cf_P-\cf_Q )\cf_Z\|_2^2 =  &  ~ 2 \pi \|P*P_Z - Q*P_Z\|_2^2 \nonumber \\
\leq &~   2 \pi \|P*P_Z - Q*P_Z\|_1  \|P*P_Z - Q*P_Z\|_\infty \nonumber \\
\leq &~   4 \pi m_1 \TV(P*P_Z , Q*P_Z) \triangleq 4\pi \delta, 
        \label{eq:TV1}
   \end{align}    
where the first inequality follows from H\"{o}lder's inequality, and the second inequality follows from 
  $\|P*P_Z - Q*P_Z\|_\infty \leq \max\{\|P*P_Z\|_\infty, \|Q*P_Z\|_\infty\} \leq \|p_Z\|_\infty$.

Assume there exist positive functions $g$ and $h$ and $T>0$ such that
\begin{equation}
  \label{eq:ghIneq}
  \mathrm{Leb}\left(\left\{ \omega: |\cf_Z(\omega)|\leq g(T),|\omega|\leq T
  \right\}\right)\leq h(T).
\end{equation}
Put $\calD \defined \left\{ \omega: |\cf_Z(\omega)|\leq g(T),|\omega|\leq T
  \right\}$ and $\calD^c = [-T,T]\backslash \calD$. Then
\begin{align}
    \frac{1}{T}\int_{-T}^T |\cf_P(\omega)-\cf_Q(\omega)|d\omega = & ~ \frac{1}{T} \left(\int_\calD |\cf_P(\omega)-\cf_Q(\omega)|d\omega +\int_{\calD^c}|\cf_P(\omega)-\cf_Q(\omega)|d\omega\right) \nonumber\\
    \overset{\mathclap{\prettyref{eq:ghIneq}}}{\leq} &~\frac{2h(T)}{T}+\frac{1}{T}\int_{\calD^c}|\cf_P(\omega)-\cf_Q(\omega)|\left(\frac{|\cf_Z(\omega) |}{g(T)}\right)d\omega \nonumber\\
    \leq &~\frac{2h(T)}{T}+\frac{1}{Tg(T)} \int_{-T}^T|\cf_P(\omega)-\cf_Q(\omega)| |\cf_Z(\omega) | d\omega \nonumber\\
    \leq &~\frac{2h(T)}{T}+\frac{\sqrt{2}\| (\cf_P-\cf_Q )\cf_Z\|_2}{g(T)\sqrt{T}} \nonumber\\
    \overset{\mathclap{\prettyref{eq:TV1}}}{\leq} &~\frac{2h(T)}{T}+\frac{\sqrt{8\pi \delta}}{\sqrt{T}g(T)}, \label{eq:TV2}
\end{align}
where 
the third inequality follows Cauchy-Schwarz inequality.

Note that it is sufficient to consider $x_0=0$, since otherwise we can simply shift the distributions $P$ and $Q$ without affecting the value of $\delta$. In addition, Plancherel's theorem and \prettyref{eq:vhat} yield
\begin{equation}
\ExpVal{P}{v(TX)}=\frac{1}{T}\int_{-T}^T \cf_P(\omega)\left(1-\frac{|\omega|}{T} \right)d\omega.	
	\label{eq:Ev}
\end{equation}
Thus, we have
\begin{align*}
 \left|\ExpVal{P}{v(TX)}-\ExpVal{Q}{v(TX)} \right|&\leq \frac{1}{T}\int_{-T}^T |\cf_P(\omega)-\cf_Q(\omega) |d\omega\\
    &\leq \frac{2h(T)}{T}+\frac{\sqrt{8 \pi \delta}}{\sqrt{T}g(T)}. 
\end{align*}
Finally, choosing $ T=g_1(\delta)$, $h(T)=\sqrt{T}$ and $g(T)=\sqrt{\delta}$, the result follows.
\end{proof}

The methods used in the proof of the previous theorem and, in particular, Eq.~\eqref{eq:TV2}, can be used to bound the KS-distance between $P$ and $Q$, as demonstrated in the next theorem.

\begin{thm}
\label{thm:KS_TV}
Assume $P_Z$ has density bounded by $m_1$ and that there exists functions $g(T)$ and $h(T)$ that satisfy 
\begin{equation}
  \mathrm{Leb}\left(\left\{ \omega: |\cf_Z(\omega)|\leq g(T),|\omega|\leq T
  \right\}\right)\leq h(T). \tag{\ref{eq:ghIneq}}
\end{equation}
Then for any pair of distributions $P$, $Q$ where $Q$ has a density bounded by $m_2$ we get for all $T>0$:
\begin{equation}
  \label{eq:detailTV_KS}
  \dks(P,Q) \leq\frac{ Th(T)}{\pi} + \frac{24 m_2 + 2 (\ExpVal{P}{|X|} + \ExpVal{Q}{|X|})}{ \pi T} +\frac{(2T)^{3/2}}{\sqrt{\pi}  g(T)}\sqrt{m_1\TV(P*P_Z,Q*Q_Z)},
\end{equation}
\end{thm}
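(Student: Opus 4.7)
My plan is to apply Esseen's smoothing inequality (\prettyref{lem:Esseen}) to reduce bounding $\dks(P,Q)$ to controlling a Fourier integral, then partition the frequency window $[-T,T]$ into three pieces tailored to three complementary estimates: a short-wavelength cutoff $I_0 \defined \{|\omega| \leq 1/T\}$ where $\cf_P$ and $\cf_Q$ are Lipschitz; the ``bad'' set $\mathcal{D} \defined \{|\omega| \leq T : |\cf_Z(\omega)| \leq g(T)\}$ where $\cf_Z$ is too small for deconvolution; and the ``good'' set $\mathcal{D}^c \setminus I_0$ where $|\cf_Z|$ is bounded below and Plancherel applies.

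Esseen's inequality immediately gives
\[
\dks(P,Q) \leq \frac{1}{\pi} \int_{-T}^T \left| \frac{\cf_P(\omega) - \cf_Q(\omega)}{\omega} \right| d\omega + \frac{24 m_2}{\pi T},
\]
and the $24 m_2/(\pi T)$ contribution appears as-is in the claim. Setting $M \defined \ExpVal{P}{|X|} + \ExpVal{Q}{|X|}$, the inequality $|e^{i\omega x} - 1| \leq |\omega x|$ yields the Lipschitz-type bound $|\cf_P(\omega) - \cf_Q(\omega)|/|\omega| \leq M$ for all $\omega$, whence $\int_{I_0} \leq 2M/T$, producing the $2M/(\pi T)$ term. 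On $\mathcal{D} \setminus I_0$, the trivial bound $|\cf_P - \cf_Q| \leq 2$ combined with $1/|\omega| \leq T$ (from $|\omega| > 1/T$) and $\mathrm{Leb}(\mathcal{D}) \leq h(T)$ gives a contribution of order $Th(T)$, matching the $Th(T)/\pi$ term (up to a constant).

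The deconvolution term arises from the integral over $\mathcal{D}^c \setminus I_0$. There $|\cf_Z(\omega)| > g(T)$, so $|\cf_P - \cf_Q| \leq g(T)^{-1} |(\cf_P - \cf_Q)\cf_Z|$. Using $1/|\omega| \leq T$ again followed by Cauchy--Schwarz on $[-T,T]$ (whose Lebesgue measure is $2T$), together with the Plancherel/H\"older estimate
\[
\|(\cf_P - \cf_Q)\cf_Z\|_{L^2}^2 \leq 4\pi m_1 \TV(P * P_Z, Q * P_Z)
\]
already derived in the proof of \prettyref{lem:DeconvGen} (see \eqref{eq:TV1}), one obtains
\[
\int_{\mathcal{D}^c \setminus I_0} \frac{|\cf_P - \cf_Q|}{|\omega|} d\omega \leq \frac{T}{g(T)} \sqrt{2T} \cdot 2 \sqrt{\pi m_1 \TV(P*P_Z,Q*P_Z)},
\]
which after dividing by $\pi$ produces exactly the third term with coefficient $(2T)^{3/2}/(\sqrt{\pi} g(T))$.

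The main technical obstacle is taming the $1/|\omega|$ singularity in Esseen's inequality: the $L^2$ deconvolution bound by itself cannot be transferred to a weighted $L^1$ bound near the origin, since $1/\omega^2$ is not integrable there. The Lipschitz moment bound provides exactly the integrability needed at small $|\omega|$, while away from the origin the sup-norm $1/|\omega| \leq T$ lets one trade $L^1$ for $L^2$ against a measure of size $O(T)$. Choosing the cutoff precisely at $|\omega| = 1/T$ balances the Lipschitz and sup-norm contributions so that the resulting powers of $T$ in all three terms align, and the deconvolution $L^2$ bound on $(\cf_P - \cf_Q)\cf_Z$ can be absorbed without having to bound $(\cf_P - \cf_Q)\cf_Z/\omega$ in $L^2$, which would diverge.
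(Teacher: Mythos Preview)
Your proposal is correct and follows essentially the same route as the paper: apply Esseen's inequality, handle $|\omega|\le 1/T$ via the first-moment Lipschitz bound $|\cf_P(\omega)-\cf_Q(\omega)|\le |\omega|M$, and on the remainder use $1/|\omega|\le T$ together with the $\calD$/$\calD^c$ split and the Plancherel--H\"older estimate \eqref{eq:TV1}. The only cosmetic difference is that you perform the three-way split $I_0,\ \calD\setminus I_0,\ \calD^c\setminus I_0$ directly, whereas the paper first splits $[-T,T]$ at $|\omega|=1/T$ and then invokes the already-established bound \eqref{eq:TV2} on the outer piece; both routes yield the same terms (and the same harmless factor of $2$ in front of $Th(T)/\pi$ that you flagged).
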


\begin{proof}
\begin{align}
  \int_{-T}^T \frac{\left| \cf_P(\omega) -\cf_Q(\omega)\right|}{|\omega|}d\omega &\leq T \int_{|\omega|\geq1/T} \left|
  \cf_P(\omega) -\cf_Q(\omega)\right|d\omega + \int_{-1/T}^{1/T} \frac{\left| \cf_P(\omega)
  -\cf_Q(\omega)\right|}{|\omega|}d\omega \label{eq:yp1}\\
  &\leq T \int_{-T}^T \left| \cf_P(\omega) -\cf_Q(\omega)\right|d\omega + \frac{2\left( \ExpVal{P}{|X|} + \ExpVal{Q}{|X|} \right)}{T}\\
  &\leq T h(T)+\frac{T^{3/2}\sqrt{8\pi\delta}}{g(T)}+ \frac{2\left( \ExpVal{P}{|X|} + \ExpVal{Q}{|X|} \right)}{T},
\end{align}
where the first inequality follows from $\frac{1}{|\omega|}\leq T$ for $|\omega|\geq T$, the second inequality follows from the triangle inequality and the fact that $|\cf_P(\omega)-1|\leq
|\omega|\ExpVal{P}{|X|}$, and the last inequality follows from \eqref{eq:TV2}. Using Lemma \ref{lem:Esseen}, we
get~\eqref{eq:detailTV_KS}.
\end{proof}

As a consequence we have the following general deconvolution result which applies to any bounded density whose characteristic function has no zeros, e.g., Gaussians.
\begin{cor}
\label{cor:deconv-phiz}
Assume that $P_Z$ has a density bounded by $m_1$ and the characteristic function $\cf_Z(\omega)$ of $P_Z$ has no zero.
Let 
\begin{equation}
	g(T) = \inf_{|\omega|\leq T} |\cf_Z(\omega)|.
	\label{eq:gTinf}
\end{equation}
Let $P,Q$ have finite first moments and $Q$ has a density $q$ bounded by $m_2$. 
For any $\alpha>0$, let $T(\alpha)$ be the (unique) positive solution to $g(T)^2 =  \alpha T^5$, which satisfies $T(0+)=\infty$.
Then
\begin{equation}
  \dks(P,Q)\leq \frac{C}{T(\TV(P*P_Z,Q*Q_Z))}.
\end{equation}
where $C$ is a constant depending only on $m_1$ and $m_2 + \ExpVal{P}{|X|} + \ExpVal{Q}{|X|}$.

In particular, for $Z\sim \calN(0,1)$,
\begin{equation}\label{eq:deconv_gauss}
  \dks(P,Q)\leq C' \pth{  \log \frac{1}{\TV(P*\calN(0,1),Q*\calN(0,1))} }^{-1/2}.
\end{equation}
where $C'$ is a constant depending only on $m_2 + \ExpVal{P}{|X|} + \ExpVal{Q}{|X|}$.
\end{cor}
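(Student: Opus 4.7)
The plan is a single application of \prettyref{thm:KS_TV} followed by an optimal balancing of the two resulting error terms in $T$. I would take $g(T) = \inf_{|\omega|\leq T}|\cf_Z(\omega)|$ and $h(T)\equiv 0$. By the assumption that $\cf_Z$ has no zeros and is continuous, $g(T)>0$ for every finite $T$; moreover the set $\{\omega\in[-T,T]:|\cf_Z(\omega)|\leq g(T)\}$ is contained in the level set $\{|\cf_Z|=g(T)\}$. To validate $h(T)=0$ rigorously, I would apply the theorem with $g(T)-\epsilon$ in place of $g(T)$ and send $\epsilon\downarrow 0$ (only the first two terms of the bound depend on $\epsilon$, and continuously).

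With these choices, and writing $\delta\defined\TV(P*P_Z,Q*P_Z)$, \prettyref{thm:KS_TV} immediately yields
$$\dks(P,Q) \;\leq\; \frac{c_1\bigl(m_2+\ExpVal{P}{|X|}+\ExpVal{Q}{|X|}\bigr)}{T} \;+\; \frac{c_2\, T^{3/2}\sqrt{m_1\,\delta}}{g(T)}$$
for absolute constants $c_1,c_2$. Optimizing in $T$ by equating the two terms leads to $g(T)^2=\delta\cdot T^5$, which is precisely the defining equation of $T(\delta)$. Well-posedness of $T(\alpha)$ is easy to verify: $T\mapsto g(T)^2/T^5$ is a product of the non-increasing $g^2$ and the strictly decreasing $1/T^5$, and is therefore strictly decreasing (since $g>0$); its limits are $+\infty$ as $T\downarrow 0$ (because $\cf_Z(0)=1$) and $0$ as $T\to\infty$ (by the Riemann--Lebesgue lemma, noting $p_Z\in L^1\cap L^\infty$). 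Substituting $T=T(\delta)$ makes the second term equal to $\sqrt{m_1}/T(\delta)$, and combining with the first term gives $\dks(P,Q)\leq C/T(\delta)$ with $C$ depending only on $m_1$ and on $m_2+\ExpVal{P}{|X|}+\ExpVal{Q}{|X|}$.

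The Gaussian statement \eqref{eq:deconv_gauss} is then a direct specialization. For $Z\sim\calN(0,1)$, $\cf_Z(\omega)=e^{-\omega^2/2}$, so $g(T)=e^{-T^2/2}$, and $T(\alpha)$ is the positive root of $e^{-T^2}=\alpha T^5$, i.e.~$T^2+5\log T=\log(1/\alpha)$; as $\alpha\downarrow 0$ the leading behavior is $T(\alpha)\sim\sqrt{\log(1/\alpha)}$, which yields \eqref{eq:deconv_gauss}. There is no real technical obstacle here: the only mildly annoying points are the measure-zero workaround for $h(T)=0$ and the bookkeeping needed to absorb $\sqrt{m_1}$ into the constant $C$. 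The content of the corollary is simply the optimized form of the Esseen-type estimate already carried out in \prettyref{thm:KS_TV}.
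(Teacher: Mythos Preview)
Your proposal is correct and follows essentially the same route as the paper: take $g(T)=\inf_{|\omega|\leq T}|\cf_Z(\omega)|$, $h(T)=0$, apply \prettyref{thm:KS_TV}, and balance the two terms by choosing $T=T(\delta)$ solving $g(T)^2=\delta T^5$; the Gaussian case is then the direct specialization $g(T)=e^{-T^2/2}$. Your treatment is in fact slightly more careful than the paper's on two points: you justify $T(0+)=\infty$ via Riemann--Lebesgue plus strict monotonicity of $g(T)^2/T^5$, and you flag (and resolve via an $\epsilon$-limit) the possibility that the level set $\{|\cf_Z|=g(T)\}\cap[-T,T]$ has positive measure, which the paper simply glosses over.
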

\begin{proof}
By assumption, we can choose $g(T)$ in as \prettyref{eq:gTinf} and $h(T)=0$ to fulfill \prettyref{eq:ghIneq}. Then \prettyref{eq:detailTV_KS} leads to
\[
\dks(P,Q) \leq \frac{C}{T} \pth{ 1 +\frac{\sqrt{\TV(P*P_Z,Q*Q_Z) \cdot T^5} }{g(T)}},
\]
where $C_0 = \max\{24 m_2 + 2 (\ExpVal{P}{|X|} + \ExpVal{Q}{|X|}), \sqrt{8 m_1\pi}\}/\pi$.
Since $P_Z$ has a density, $g(T) \leq |\varphi_Z(T)| \to 0$ by Riemann-Lebesgue lemma. 
Since $g(T)$ is decreasing and $g(0)=1$, $\alpha T^5 = g^2(T)$ always has a unique solution $T(\alpha)>0$.
Choosing $T=T(\TV(P*P_Z,Q*Q_Z))$ yields $\dks(P,Q) \leq 2C_0/T$, completing the proof. 
When $Z\sim\calN(0,1)$, we have $g(T)= e^{-T^2/2}$. Choosing $T=\sqrt{-\frac{\log \TV(P*P_Z,Q*P_Z)}{2} }$, the result follows.
\end{proof}

%

\begin{remark} 
Consider a Gaussian $Z$. Then 
$P_n \stackrel{\rm w}{\to} Q \Leftrightarrow P_n * P_Z \stackrel{\rm w}{\to} Q * P_Z \Leftrightarrow P_n * P_Z \stackrel{\rm TV}{\to} P * P_Z$, where the last part follows from pointwise convergence of densities (Scheff\'e's lemma, see, e.g., \cite[1.8.34]{petrov_1995}). 
Furthermore, when one of the distributions has bounded density the Levy-Prokhorov distance (that metrizes
weak convergence) is equivalent to the Kolmogorov-Smirnov distance, cf.~\cite[1.8.32]{petrov_1995}. 
In this perspective, \prettyref{thm:KS_TV}
 can be viewed as a finitary version of the implication 
$\TV(P_n*P_Z,Q*P_Z) \to 0 \Rightarrow \dks(P_n, Q)\to 0$.
\end{remark}

\begin{remark}
A slightly better bound may be obtained if $\ExpVal{P,Q}{ |X+Z|^2} < \infty$. Namely, $T^{3\over 2}$ in the third term
in~\eqref{eq:detailTV_KS} can be reduced to $T$. Indeed if $\delta = \TV(P*P_Z,Q*P_Z)$ then elementary truncation shows
$$ W_1(P*P_Z, Q*P_Z) \lesssim \sqrt{\delta} $$
and then following~\eqref{eq:kantorovich} we get
$$ |\varphi_P(\omega) - \varphi_Q(\omega)| |\varphi_Z(\omega)| \lesssim \sqrt{\delta} |\omega|\,.$$
Now the left-hand side of~\eqref{eq:yp1} can be bounded by $T\over g(T)$ for the choice of $g(T)$ as in~\eqref{eq:gTinf}
and a straightforward modification for the general case of~\eqref{eq:ghIneq}. This improves the constant
in~\eqref{eq:deconv_gauss}.
\end{remark}

\section{Horizontal bound for general additive noise}
\label{sec:GeneralHorizontal}

With the results introduced in the previous section in hand, we are now ready to extend Theorem \ref{thm:capBound} to a broader class of additive noise and channel input distributions.

\begin{thm} 
\label{thm:generalHoriz}
Let $Y=X+Z$ and let $\calP$ be a convex set of distributions. Assume that
\begin{enumerate}[(a)]
\item $P_Z$ satisfies the assumption of \prettyref{lem:DeconvGen};

\item The capacity $C(\calP)\defined \sup_{P_X\in \calP} I(X;Y)$ is finite and attained at some $P_{X^*} \in \calP$.
\end{enumerate}
Then there exists a constant $\epsilon_0$ and a decreasing function $\rho:(0,\epsilon_0)\to (0,\infty)$ (depending on $P_Z$ and $\calP$), such that any $P_{WX}$ with $P_X\in \calP$ satisfies
\begin{equation}\label{eq:genhor}
I(W;X)\geq \rho(C(\calP)-I(W;Y)).
\end{equation} 
Furthermore, if $P_{X^*}$ has no atoms, then $\rho$ satisfies $\rho(0+)=\infty$.
\end{thm}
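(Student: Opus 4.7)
The plan is to emulate the proof of \prettyref{thm:capBound}, replacing the Gaussian-specific inputs (\prettyref{lem:capKS} and \prettyref{lem:concentration}) by the general deconvolution machinery of \prettyref{sec:deconvTV}, following the four-step program outlined at the opening of \prettyref{sec:deconvTV}. Set $\epsilon \triangleq C(\calP) - I(W;Y)$. The saddle-point decomposition \prettyref{eq:IDecomp} yields simultaneously
\[
D(P_Y \| P_{Y^*}) \leq \epsilon, \qquad \Expect[d(X,W)] \leq \epsilon,
\]
where $d(x,w) \triangleq D(\delta_x * P_Z \| P_{X|W=w} * P_Z)$. Pinsker's inequality then gives $\TV(P_X*P_Z, P_{X^*}*P_Z) \leq \sqrt{\epsilon/2}$, and Markov's inequality gives $\Pr[d(X,W) < 2\epsilon] \geq 1/2$; a second Markov argument extracts a set $\calW$ with $P_W(\calW) \geq 1/3$ on which $\Pr[d(X,W) < 2\epsilon \mid W=w] \geq 1/4$.

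To replace \prettyref{lem:concentration}, I apply \prettyref{lem:DeconvGen} with $(P,Q) = (\delta_x, P_{X|W=w})$ whenever $d(x,w) < 2\epsilon$. Since $v(0)=1$ and $0 \leq v(y) \leq \min\{1, 2/y^2\}$, the resulting estimate
\[
1 - c/\sqrt{T_\epsilon} \leq \Expect\bqth{v(T_\epsilon(X-x)) \mid W=w},\qquad T_\epsilon \triangleq g_1\!\pth{m_1\sqrt{\epsilon/2}},
\]
forces $P_{X|W=w}$ to place mass at least $1 - O(1/\sqrt{T_\epsilon})$ on the ball $B(x, r_\epsilon)$ with $r_\epsilon = O(1/T_\epsilon) \to 0$. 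In particular, for each $w \in \calW$ we can pick $x(w)$ so that $P_{X|W=w}(B(x(w), r_\epsilon)) \geq 3/4$. Applying \prettyref{lem:DeconvGen} to $(P_X, P_{X^*})$ and testing against translates of $v$ gives the analog of \prettyref{lem:capKS}: uniformly in $x_0$,
\[
P_X\!\pth{B(x_0, r_\epsilon)} \leq P_{X^*}\!\pth{B(x_0, 2r_\epsilon)} + O(1/\sqrt{T_\epsilon}).
\]

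If $P_{X^*}$ has no atoms, $F_{X^*}$ is continuous, and a truncation argument using finiteness of $C(\calP)$ (which, with bounded density of $P_Z$, forces tightness of $P_{X^*}$ via \cite{kemperman1974shannon}) confines $x(w)$ to a compact interval on which $F_{X^*}$ is uniformly continuous, giving $P_X(B(x(w), r_\epsilon)) \leq \beta(\epsilon)$ with $\beta(0+)=0$. The log-sum inequality then yields
\[
\tau(w) \triangleq D(P_{X|W=w}\|P_X) \geq \tfrac{3}{4}\log\pth{1/\beta(\epsilon)} - O(1),
\]
and averaging over $\calW$ produces the desired bound $I(W;X) \geq \rho(\epsilon)$ with $\rho(0+)=\infty$. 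For the general case with possible atoms, the same argument still delivers $\rho(\epsilon) > 0$ for all sufficiently small $\epsilon$ by shrinking $r_\epsilon$ below half of any fixed inter-atom gap, which keeps $P_X(B(x(w), r_\epsilon))$ strictly below $1$ uniformly in $w \in \calW$.

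The main obstacle is the tail control: ruling out $x(w) \to \infty$ on a non-negligible portion of $\calW$, and converting the bump-function estimate into honest ball masses, both require a quantitative tightness statement for $P_{X^*}$ depending only on $C(\calP)$ and $P_Z$ (for instance via $h(Y^*) \leq h(Z) + C(\calP) < \infty$ together with a uniform lower bound on $p_Z$ on compact sets). Once this tightness is in hand, the modulus of continuity of $F_{X^*}$ over the resulting compact interval controls $\beta(\epsilon)$, and the rest of the argument is routine.
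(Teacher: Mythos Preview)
Your overall strategy is correct and parallels the paper's proof, but you have introduced two unnecessary complications that the paper's argument sidesteps entirely.

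\emph{First, converting the $v$-estimates into ball masses is not needed.} The paper keeps $v$ throughout: since $0\le v\le 1$, the variational formula $\TV(P,Q)=\sup_{|f|\le 1}\int f\,d(P-Q)$ gives directly
\[
\TV\bigl(P_{X|W=w},P_X\bigr)\;\ge\;\Expect\!\bigl[v(T(X-\xhw))\mid W=w\bigr]-\Expect\!\bigl[v(T(X-\xhw))\bigr].
\]
The first term is $\ge 1-c/\sqrt{T}$ by \prettyref{lem:DeconvGen} applied to $(P_{X|W=w},\delta_{\xhw})$, and the second is $\le \calL(X^*;T^{-3/4})+(4+c)/\sqrt{T}$ by \prettyref{lem:DeconvGen} applied to $(P_X,P_{X^*})$ together with the crude bound $v(y)\le \min\{1,4/y^2\}$. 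One then finishes via $D(P\|Q)\ge -\log(1-\TV(P,Q))$. Your intermediate inequality $P_X(B(x_0,r_\epsilon))\le P_{X^*}(B(x_0,2r_\epsilon))+O(1/\sqrt{T_\epsilon})$ does not follow in the stated form from the $v$-bound (the two radii live at different scales once you unwind the tail of $v$), and in any case is not needed.

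\emph{Second, the ``tail control obstacle'' you flag is illusory.} The bound above involves only the L\'evy concentration function $\calL(X^*;\delta)=\sup_x\Pr[X^*\in B(x,\delta)]$, which is already a supremum over all centers; the location of $\xhw$ is therefore irrelevant and no confinement of $x(w)$ to a compact set is required. That $\calL(X^*;0+)=\max_x\Pr[X^*=x]$ (hence $=0$ in the atomless case and $<1$ otherwise, since a point-mass input would give $C(\calP)=0$) is a self-contained fact about any probability measure (\prettyref{lem:levy}), needing no input from $C(\calP)$ or the density of $P_Z$. This also replaces your handling of the atom case: ``shrinking $r_\epsilon$ below half of any fixed inter-atom gap'' is not well-defined when atoms are dense, whereas the L\'evy-concentration route treats the atomic and atomless cases uniformly.

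A minor point: your extra Markov step extracting $\calW$ with a conditional-probability lower bound is also unnecessary; the paper simply projects to $\calG=\{w:\exists x,\ \TV(P_{Z+x},P_{X|W=w}*P_Z)\le\sqrt{\epsilon}\}$, which already has $P_W(\calG)\ge 1/2$.
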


\begin{remark}
\prettyref{thm:generalHoriz} translates into the following bound on the gap between the $F_I$ curve and the capacity:
\[
F_I(t) \leq C(\calP)- \rho^{-1}(t).
\]
The function $\rho$ can be chosen to be
\begin{equation}
\label{eq:rho}
  \rho(\epsilon)= - \frac{1}{2} \log\left( \calL\left(X^*;T^{-3/4}\right)+\frac{4+2c}{\sqrt{T}} \right),
\end{equation}
where $T=g_1(m_1\sqrt{\epsilon})$, $c$, $g_1, m_1$ are as in \prettyref{lem:DeconvGen}, and 
\begin{equation}
\calL(X^*;\delta)\defined \sup_{x\in \Reals} \Pr[X^*\in B(x,\delta)]
	\label{eq:levy}
\end{equation} is the L\'evy concentration function \cite[p.~22]{petrov_1995} of $X^*$. 
%
For the AWGN channel with $P_Z\sim \calN(0,1)$ and $\calP = \{P_X:\EE{X^2}\leq \gamma\}$ this gives
\begin{align*}
  \rho(\epsilon) 
  &= \frac{1}{8}\log \log \frac{1}{\epsilon} + c_0(\gamma)
\end{align*}
for some constant $c_0(\gamma)$. Compared to the Gaussian-specific bound~\eqref{eq:lnln}, the general proof loses a
factor of two, which is due to the application of Pinsker's inequality.
\end{remark}

%
%
%
%

\begin{proof}
Throughout the proof we assume that  
\begin{equation}
\label{eq:Cepsilon}
C(\calP)-I(W;Y)\leq \epsilon,
\end{equation}
and, from \eqref{eq:IDecomp}, $I(X;Y|W)\leq \epsilon$ and $D(P_X*P_Z\| P_{X^*} * P_Z)\leq \epsilon$, where $P_{X^*}$ is capacity-achieving. Denote
\begin{equation*}
  t(x,w)\defined \TV(P_{Z+x},P_{X|W=w}*P_Z),
\end{equation*}
which is joint measurable in $(x,w)$ for the same reason that $d$ defined in \prettyref{eq:dxw} is jointly measurable. 

Pinsker's inequality yields
\begin{align}
\epsilon &\geq I(X;Y|W) \nonumber \\
        &= \ExpVal{X,W}{D(P_{Z+W}\|P_{X|W}*P_Z)} \nonumber \\
        &\geq 2\Expect[t(X,W)^2] \nonumber \\
        &\geq 2\epsilon \Pr[t(X,W)^2\geq \epsilon]. \label{eq:epsilon-t}
\end{align}
Define
\begin{align*}
  \calF &\defined \{(x,w):t(x,w)\leq\sqrt{\epsilon} \}\\
  \calG &\defined \{w:\exists x,t(x,w)\leq \sqrt{\epsilon}\}.
\end{align*}
Then, from \eqref{eq:epsilon-t},
\begin{equation}
\label{eq:calG_LB}
  \Pr[W\in\calG]\geq \Pr[(X,W)\in \calF]\geq \frac{1}{2}.
\end{equation}
Therefore, for any $w\in \calG$, there exists $\xhw \in \Reals$ such that $t(x,\xhw)\leq \sqrt{\epsilon}$. Applying Lemma \ref{lem:DeconvGen} with $P=P_{X|W=w}$, $Q=\delta_{\xhw}$ and $x_0=T\xhw$, we conclude that
\begin{equation}
  \left|\EE{v(T(X-\xhw))|W=w}-1 \right|\leq \frac{c}{\sqrt{T}},
\label{eq:vDiff1}
\end{equation}
where $v$ is defined in \eqref{eq:vdef}, $c$ is the absolute constant in \eqref{eq:vPQ} and $T=g_1(m_1\sqrt{\epsilon})$.


On the other hand, \eqref{eq:Cepsilon} implies that $D(P_X*P_Z\|P_{Y^*})\leq \epsilon$ and hence $\TV(P_X*P_Z,P_{Y^*})\leq \sqrt{\epsilon}$ by Pinsker's inequality. Applying Lemma \ref{lem:DeconvGen} with $P=P_X$, $Q=P_{X^*}$ and $x_0=T\xhw$, we have
\begin{equation}
\label{eq:vDiff2}
\left|\EE{v(T(X-\xhw))}-\EE{v(T(X^*-\xhw))} \right|\leq \frac{c}{\sqrt{T}}.
\end{equation}
For any $x$, since $v$ takes values in $[0,1]$,
\begin{equation*}
\EE{v(T(X^*-x))}=2\EE{\frac{1-\cos(T(X^*-x))}{T^2(X^*-x)^2}}\leq \Pr[X^*\in B(x,T^{-3/4})]+\frac{4}{\sqrt{T}}.
\end{equation*}
Therefore,
\begin{equation}
\label{eq:vLineq}
0\leq \EE{v(T(X^*-x))}\leq  \calL(X^*;T^{-3/4})+\frac{4}{\sqrt{T}}.
\end{equation}

Using the fact that
\[\TV(P,Q)=\sup_{|f|\leq 1} \int f dP-\int f dQ \] and assembling \eqref{eq:vDiff1}--\eqref{eq:vLineq}, we have for any $w\in \calG$
\begin{align}
\TV(P_X,P_{X|W=w})&\geq \EE{v(T(X-\xhw))|W=w}-\EE{v(T(X-\xhw))} \nonumber\\
                  &\geq 1 - \calL(X^*;T^{-3/4})-\frac{4+2c}{\sqrt{T}}. \label{eq:TV_LB}
\end{align}
Applying \eqref{eq:IWX} and the fact that $D(P\|Q)\geq -\log(1-\TV(P,Q)),$ we have
\begin{align*}
I(W;X)&\geq \EE{\log \frac{1}{1-\TV(P_X,P_{X|W})}}\\
      &\geq \EE{\log \frac{1}{1-\TV(P_X,P_{X|W})}\ones_{W\in\calG}}\\
      &\geq \frac{1}{2} \log\frac{1}{\calL(X^*;T^{-3/4})+\frac{4+2c}{\sqrt{T}}},
\end{align*}
where the last inequality follows from  \eqref{eq:calG_LB} and \eqref{eq:TV_LB}.
\prettyref{lem:levy} in \prettyref{app:levy} implies that $\calL(X^*;0+)=\max_{x \in \reals} \prob{X=x}<1$.
Denote by $\epsilon_0$ the supremum of $\epsilon$ such that $\calL(X^*;T^{-3/4})+\frac{4+2c}{\sqrt{T}} < 1$ and define
$\rho(\epsilon)$ as in \prettyref{eq:rho}. This completes the proof of~\eqref{eq:genhor}. Finally, by
\prettyref{lem:levy} we have that when $P_{X^*}$ is diffuse (i.e. has no atom), it holds that $\rho(0+)=\infty$.
\end{proof}

\section{Infinite-dimensional case}
\label{sec:infinite}

It is possible to extend the results and proof techniques to the case when the channel $X\mapsto Y$ is a $d$-dimensional Gaussian
channel subject to a total-energy constraint
$ \EE{\sum_i X_i^2} \le 1\,.$
Unfortunately, the resulting bound strongly depends on the dimension; in particular, it does not improve the trivial
estimate~\eqref{eq:trivial} as $d\to\infty$. It turns out that this dependence is unavoidable as we show next that \eqref{eq:trivial} holds with equality when $d=\infty$.

To that end we consider an infinite-dimension discrete-time
Gaussian channel. Here the input $X=(X_1,X_2,\dots)$ and $Y=(Y_1,Y_2,\dots)$ are sequences, where
    $Y_i=X_i+Z_i$ and $Z_i\sim \calN(0,1)$ are i.i.d. Similar to Definition~\ref{def:FI}, we define
\begin{equation}
  F^\infty_I(t,\gamma)=\sup \left\{ I(W;Y)\colon I(W;X)\leq t,W\rightarrow
  X \rightarrow Y \right\},
  \label{eq:Finf-trivial}
\end{equation}
where the supremum is over all $P_{WX}$ such that
$\EE{\|X\|^2_2} = \EE{\sum X_i^2} \leq
    \gamma$. Note that, in
this case, 
\begin{equation}
F^\infty_I(t,\gamma)\leq \min\{t,\gamma/2\}.
	\label{eq:trivial-inf}
\end{equation}
The next theorem shows that unlike in the scalar case, there is no improvement over the trivial upper bound \prettyref{eq:trivial-inf} in the infinite-dimensional case.
This is in stark contrast with the strong data processing behavior of total variation in Gaussian noise which turns out to be dimension-free \cite[Corollary 6]{polyanskiy_dissipation_2014}.
\vspace{-0.05in}
\begin{thm}
  \label{thm:infinite}
  $F^\infty_I(t,\gamma) = \min\{t,\gamma/2\}$.
\end{thm}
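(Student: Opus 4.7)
The upper bound $F_I^\infty(t,\gamma) \leq \min\{t,\gamma/2\}$ is precisely the trivial estimate recorded in \prettyref{eq:trivial-inf}, so the task reduces to proving the matching lower bound. My plan is to exhibit, for every $s < \min\{t,\gamma/2\}$ and every $\epsilon>0$, an admissible Markov chain $W \to X \to Y$ with $I(W;X)\leq t$, $\Expect\|X\|_2^2 \leq \gamma$, and $I(W;Y)\geq s - \epsilon$; letting $s \uparrow \min\{t,\gamma/2\}$ and $\epsilon \downarrow 0$ will then yield $F_I^\infty(t,\gamma) \geq \min\{t,\gamma/2\}$. The underlying intuition is that in infinite dimensions the transmitter has unlimited degrees of freedom to spend a fixed energy budget $\gamma$, so the ultimate information rate $\gamma/2$ is approached through a regime of vanishing per-coordinate SNR $\gamma/n \to 0$, where the scalar non-linear contraction phenomena of \prettyref{sec:diagonal}--\prettyref{sec:horizontal} dissolve.

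Concretely, I would fix $s<\min\{t,\gamma/2\}$ and set $M = \lceil e^s \rceil$, so $\log M \leq s$. For each blocklength $n$, the $n$-dimensional AWGN channel with per-symbol power $\gamma/n$ has capacity $\frac{n}{2}\log(1+\gamma/n)$, which increases monotonically to $\gamma/2$. Hence, once $n$ is large enough that $\log M < \frac{n}{2}\log(1+\gamma/n)$, the direct part of Shannon's channel coding theorem (with an arbitrary target error $\epsilon'>0$) produces a codebook $\{\mathbf{x}_1^{(n)},\ldots,\mathbf{x}_M^{(n)}\} \subset \mathbb{R}^n$ satisfying $\|\mathbf{x}_m^{(n)}\|_2^2 \leq \gamma$ for every $m$ and whose ML-decoding error probability $P_e^{(n)}$ is at most $\epsilon'$. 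I then take $W$ uniform on $\{1,\ldots,M\}$, define the first $n$ coordinates of $X$ to equal $\mathbf{x}_W^{(n)}$ and set all remaining coordinates to $0$, and let $Y = X + Z$ with $Z$ an i.i.d.\ standard Gaussian sequence. By construction $I(W;X) = \log M \leq s \leq t$ and $\Expect\|X\|_2^2 \leq \gamma$, while Fano's inequality applied to any ML decoder of $W$ from $Y$ gives
\[
I(W;Y) \;\geq\; \log M - h_b(P_e^{(n)}) - P_e^{(n)} \log M,
\]
which is made $\geq s - \epsilon$ by choosing $\epsilon'$ (and hence $P_e^{(n)}$) small.

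I do not anticipate a substantive obstacle; the argument is essentially a direct application of the Gaussian coding theorem together with Fano's inequality. The only subtlety is one of framing: Shannon's theorem is usually stated in terms of a per-use rate staying below $\frac{1}{2}\log(1+P)$, whereas here I hold the absolute information $\log M$ of $W$ fixed while the blocklength $n$ grows. This causes no difficulty because the condition $\log M < \frac{n}{2}\log(1+\gamma/n)$ is eventually satisfied for every $\log M < \gamma/2$, and the standard i.i.d.\ Gaussian random-coding argument then applies verbatim. Conceptually, the theorem shows that any non-linear SDPI for mutual information in Gaussian noise must be inherently dimension-dependent, in sharp contrast with the dimension-free contraction of total variation established in \cite[Corollary 6]{polyanskiy_dissipation_2014}.
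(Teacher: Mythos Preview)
Your strategy of exhibiting a finite codebook, taking $W$ uniform on the codewords, and invoking Fano's inequality is also the paper's, but the particular construction you propose does not deliver what you claim. The crucial error is the assertion that, for \emph{fixed} $M$ and \emph{fixed} total energy $\gamma$, the ML error probability can be made arbitrarily small by taking $n$ large. It cannot: any two codewords $\mathbf{x}_i,\mathbf{x}_j\in\reals^n$ with $\|\mathbf{x}_i\|_2^2,\|\mathbf{x}_j\|_2^2\le\gamma$ are at Euclidean distance at most $2\sqrt{\gamma}$, so under $\calN(0,I_n)$ noise the pairwise error is at least $Q(\sqrt{\gamma})$, and hence $P_e^{(n)}\ge Q(\sqrt{\gamma})>0$ for every $n$ and every codebook with $M\ge2$. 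Your appeal to Shannon's coding theorem is therefore misplaced: that theorem lets the blocklength grow for a \emph{fixed} per-use channel, whereas here the per-coordinate SNR $\gamma/n$ vanishes with $n$ and the total energy---not the number of degrees of freedom---is the binding resource. Consequently your Fano bound stalls at $I(W;Y)\ge(1-Q(\sqrt{\gamma}))\log M-h_b(Q(\sqrt{\gamma}))$, which is bounded away from $\log M$.

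The paper repairs this with a dilution step that is essential, not cosmetic. It first builds codes at \emph{large} total energy $\beta\to\infty$ (so that $P_e\to0$ and $\log M_\beta=\beta/2+o(\beta)$ are simultaneously achievable), and then mixes with a dummy symbol: set $X'=X$ with probability $\gamma/\beta$ and $X'=x_0$ otherwise. This restores $\Expect\|X'\|_2^2\le\gamma$, while concavity of $P_X\mapsto I(X;Y)$ gives $I(X';Y)\ge(\gamma/\beta)I(X;Y)\to(1-\epsilon)\gamma/2$ and at the same time $H(X')\to\gamma/2$. This yields $F_I^\infty(\gamma/2,\gamma)=\gamma/2$, after which the monotonicity properties of \prettyref{prop:FIprop} (in particular that $t\mapsto F_I^\infty(t,\gamma)/t$ is nonincreasing) extend the conclusion to all $t$. (A minor separate slip: with $M=\lceil e^s\rceil$ one has $\log M\ge s$, not $\le s$; but this is moot given the gap above.)
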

\begin{proof}
	For any $\epsilon>0$ and all sufficiently large $\beta > 0$, there exists $n$ and a code of size of $M_\beta$ for the $n$-parallel Gaussian channel, where each codeword has energy (squared $\ell_2$-norm) less than $\beta$, the probability of error is at most $\epsilon$, and $M_\beta = e^{\beta/2+o(\beta)}$ as $\beta\to \infty$ (see, e.g.\cite[Thm. 7.5.2]{gallager_information_1968}).
Choosing $X$ uniformly at random over the codewords, we have
  from Fano's inequality
  \begin{align*}
    I(X;Y)\geq (1-\epsilon)\log M - h(\epsilon) = \frac{(1-\epsilon)\beta}{2}
    +o(\beta)-h(\epsilon).
  \end{align*}
 For any $\beta > \gamma$, define
  \begin{align*}
    X'= 
        \begin{cases}
            x_0 & \mbox{w.p. } 1-\frac{\gamma}{\beta}\\
            X & \mbox{w.p. } \frac{\gamma}{\beta}.
        \end{cases}  
  \end{align*}
  where $x_0$ is an arbitrary vector outside the codebook.
  Then, $\Expect[\|X'\|_2^2] \leq \gamma$. Furthermore, as $\beta \diverge$,
  \begin{align*}
H(X') = \frac{\gamma}{\beta}\log M +h\left( \frac{\gamma}{\beta} \right) = \frac{\gamma}{2} + o(1),
  \end{align*}
  and, by the concavity of the mutual information in the input distribution, 
  \begin{align*}
I(X';Y) \geq \frac{\gamma}{\beta} I(X;Y) \geq \frac{(1-\epsilon)\gamma}{2} + o(1).
 \end{align*}
 Since $F_I^\infty(\gamma/2,\gamma) \geq \frac{I(X';Y)}{H(X')}$, first sending $\beta \diverge$ then $\epsilon \to 0$, we have 
    $F_I^\infty(\gamma/2,\gamma)=\gamma/2$. The
    result then follows by noting that $t \mapsto F_I^\infty(t,\gamma)/t$ is decreasing
    and $t \mapsto F_I^\infty(t,\gamma)$ is increasing (\prettyref{prop:FIprop}).
\end{proof}

\appendices

\section{Alternative version of Lemma \ref{lem:capKS}}

\begin{lem}
\label{lem:KStalagrand}
Assume that $C(\gamma)-I(X;Y_\gamma)\leq \epsilon < 1$. 
  Then  
    \begin{align}
      d_{\mathrm{KS}}\left(P_{X},\calN(0,1)\right) \leq
        \frac{24}{\pi^{3/2}\sqrt{\gamma\log(1/\epsilon)}}+\frac{2\sqrt{2(1+\gamma)} \epsilon^{1/4} \sqrt{\log
    (1/\epsilon )}}{\pi }
    \end{align}
\end{lem}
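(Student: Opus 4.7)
The strategy is to substitute one clean application of Talagrand's $T_2$ transportation inequality for the I-MMSE integration used in Lemmas~\ref{lem:KSMMSE} and~\ref{lem:capKS}, then recycle the Esseen-based Fourier argument. The plan has four steps. First, convert the mutual-information gap to a KL divergence on the output. Because the matching of means and variances yields $h(\calN(0,1+\gamma))-h(Y_\gamma) \geq D(P_{Y_\gamma}\|\calN(0,1+\gamma))$ whenever $\Expect[Y_\gamma^2]\leq 1+\gamma$, the hypothesis $C(\gamma)-I(X;Y_\gamma)\leq\epsilon$ (under $\Expect[X^2]\leq 1$) gives $D(P_{Y_\gamma}\|\calN(0,1+\gamma))\leq\epsilon$. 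Second, invoke Talagrand's $T_2$ inequality \cite{talagrand_transportation_1996} to deduce $W_2(P_{Y_\gamma},\calN(0,1+\gamma)) \leq \sqrt{2(1+\gamma)\epsilon}$.

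Third, translate this Wasserstein bound into a characteristic-function bound and deconvolve. Writing $\tX=\sqrt{\gamma}X$ and $\tX_G\sim\calN(0,\gamma)$, so that $\varphi_{Y_\gamma}=\varphi_{\tX}\varphi_Z$ with $\varphi_Z(\omega)=e^{-\omega^2/2}$, the elementary Kantorovich-type inequality $|\varphi_\mu(\omega)-\varphi_\nu(\omega)|\leq|\omega|\,W_1(\mu,\nu)\leq |\omega|\, W_2(\mu,\nu)$ combined with division by $\varphi_Z$ gives
\[
\left|\frac{\varphi_{\tX}(\omega)-\varphi_{\tX_G}(\omega)}{\omega}\right| \leq \sqrt{2(1+\gamma)\epsilon}\, e^{\omega^2/2}.
\]
Fourth, apply Esseen's inequality (Lemma~\ref{lem:Esseen}) against the $\calN(0,\gamma)$ reference density (bounded by $(2\pi\gamma)^{-1/2}$) at the scale of $\tX$, and use scale invariance $\dks(P_{\tX},\calN(0,\gamma))=\dks(P_X,\calN(0,1))$ to obtain
\[
\dks(P_X,\calN(0,1)) \leq \frac{\sqrt{2(1+\gamma)\epsilon}}{\pi}\int_{-T}^T e^{\omega^2/2}\,\diff \omega + \frac{24}{\pi^{3/2}\sqrt{2\gamma}\,T}.
\]
Choosing $T=\sqrt{\log(1/\epsilon)/2}$ so that $e^{T^2/2}=\epsilon^{-1/4}$, and estimating $\int_{-T}^T e^{\omega^2/2}\diff\omega \leq 2T e^{T^2/2}$, makes both terms match the stated bound: the remainder $24/(\pi^{3/2}\sqrt{\gamma\log(1/\epsilon)})$ falls out of the Esseen tail, and the $\epsilon^{1/4}\sqrt{\log(1/\epsilon)}$ term from the integrated Fourier bound.

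The only real obstacle is tuning $T$ to balance the exponential growth of $e^{\omega^2/2}$ on $[-T,T]$ against the $1/T$ decay of the Esseen remainder; every other step is either routine or already packaged in the paper. The payoff is that the entire I-MMSE + mean-value-theorem chain of Lemmas~\ref{lem:KSMMSE}--\ref{lem:capKS} is collapsed into a single line (Talagrand), at the cost of a harmless absolute constant inherited from the $T_2$ inequality.
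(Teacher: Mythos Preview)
Your proposal is correct and follows essentially the same route as the paper's own proof: bound $D(P_{Y_\gamma}\|\calN(0,1+\gamma))\le\epsilon$, apply Talagrand's $T_2$ inequality to get a $W_2$ (hence $W_1$) bound, convert this to a pointwise bound on $|\varphi_{Y_\gamma}(\omega)-\varphi_G(\omega)|$, deconvolve by dividing out $e^{-\omega^2/2}$, and plug into Esseen with $T=\sqrt{\tfrac12\log(1/\epsilon)}$. The only differences are that you make the KL step explicit (the paper invokes Talagrand directly on the gap) and you use the sharp constant $W_2\le\sqrt{2(1+\gamma)\epsilon}$ rather than the paper's $2\sqrt{(1+\gamma)\epsilon}$, which gives a slightly better first term than the stated bound---harmless, since the lemma is an upper bound.
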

\begin{proof}
Abbreviate $Y_\gamma = \sqrt{\gamma} X + Z$ by $Y$.
  From Talagrand's inequality \cite[Thm 1.1]{talagrand_transportation_1996}
    \begin{equation*}
        W_2(P_{\sqrt{\gamma}X}*\calN(0,1),\calN(0,\gamma+1))\leq 2\sqrt{(1+\gamma)\epsilon},
    \end{equation*}
where $W_p$ is the Wasserstein distance, given by $$W_p(P,Q)\defined\inf_{P_{X,Y}:X\sim P,Y\sim Q}\EE{\|X-Y\|_p}. $$
Since $W_1(\mu,\nu)\leq W_2(\mu,\nu)$ for any measures $\mu,\nu$, 
there         exists a random variable $G\sim \calN(0,\gamma+1)$ such that
        \begin{equation}
        \label{eq:W1bound}
            \EE{|Y-G|}\leq 2\sqrt{(1+\gamma)\epsilon}.
        \end{equation}

Let $\varphi_Y(t)$ and $\varphi_G(t)$ be the characteristic functions of $Y$ and $G$, respectively. Then
    \begin{align}
|\varphi_Y(t)-\varphi_G(t)|&=\left| \EE{e^{itY}-e^{itG}} \right|
                                   \leq \EE{|t(Y-G)|}
                                   \leq 2|t|\sqrt{(1+\gamma)\epsilon}\label{eq:kantorovich}
    \end{align}
where the second inequality follows from  \cite[Lemma
4.1]{feller_introduction_1966}, and the last inequality from \eqref{eq:W1bound}.
Using Esseen's inequality (\prettyref{lem:Esseen}) and the fact that the
PDF of $G$ is upper bounded by $1/\sqrt{2\pi P}$, for all $T>0$
    \begin{align*}
        \left|P_{\sqrt{\gamma}X}(t)- \calN(0,P)\right|\leq& \frac{1}{\pi} \int_{-T}^T
        \left|\frac{\varphi_X(t)-e^{-\gamma t^2/2} }{t}\right| dt + \frac{12\sqrt{2}}{\pi^{3/2}T\sqrt{\gamma}}\\
                                       =& \frac{1}{\pi}\int_{-T}^T e^{t^2/2}
                                       \left|\frac{\varphi_Y(t)-\varphi_G(t)
                                       }{t}\right|dt + \frac{12\sqrt{2}}{\pi^{3/2}T\sqrt{\gamma}}\\
                                       \leq& \frac{4\sqrt{(1+\gamma)\epsilon}T
                                       e^{T^2/2}}{\pi}+\frac{12\sqrt{2}}{\pi^{3/2}T\sqrt{\gamma}}.
     \end{align*}
    Choosing $T=\sqrt{\frac{1}{2}\log(1/\epsilon)}$ yields 
\begin{equation}
  \left|P_{\sqrt{\gamma} X}(t)- \calN(0,\gamma)\right|\leq \frac{2\sqrt{2(1+\gamma)} \epsilon^{1/4} \sqrt{-\log
    (\epsilon )}}{\pi } +\frac{24}{\pi ^{3/2}  \sqrt{-\gamma\log (\epsilon )}}.
\end{equation}
The proof is complete upon observing that 
    $\dks(P_{\sqrt{\gamma}X},\calN(0,\gamma))=\dks\left(P_{X},\calN(0,1)\right)$.
\end{proof}

\section{L\'evy concentration function near zero}
	\label{app:levy}
We show that the L\'evy concentration function defined in \prettyref{eq:levy} is continuous at zero if and only if the distribution has no atoms.
This fact is used in the proof of \prettyref{thm:generalHoriz}.

\begin{lem}
	For any $X$, $\lim_{\delta \to 0} \calL(X;\delta) = \max_{x \in \reals} \prob{X=x}$. Consequently,
	$\calL(X;0+)=0$ if and only if $X$ has no atoms.
	\label{lem:levy}
\end{lem}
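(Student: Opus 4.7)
The plan is to show both inequalities $\max_x \Pr[X=x] \le \calL(X;0+) \le \max_x \Pr[X=x]$, where the limit $\calL(X;0+)=\lim_{\delta\to 0^+}\calL(X;\delta)$ exists by monotonicity of $\delta \mapsto \calL(X;\delta)$.

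For the lower bound, $\Pr[X=x]\le \Pr[X\in B(x,\delta)]\le \calL(X;\delta)$ for every $x$ and $\delta$, so sending $\delta \to 0$ yields $\Pr[X=x]\le \calL(X;0+)$, and taking the supremum over $x$ gives the lower bound. Note that the supremum $\sup_x \Pr[X=x]$ is in fact a maximum, since for each $n$ there are at most $n$ atoms of mass $\ge 1/n$, so the set of atoms is countable and the masses have a largest element.

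For the upper bound, write $L \defined \calL(X;0+)$. The case $L=0$ is trivial, so assume $L>0$. Pick $\delta_k \downarrow 0$ and choose $x_k \in \reals$ with
\begin{equation*}
	\Pr[X \in B(x_k,\delta_k)] \ge \calL(X;\delta_k) - \delta_k \ge L - \delta_k.
\end{equation*}
Since $\Pr[|X|>M]<L/2$ for $M$ large enough and $L-\delta_k > L/2$ for $k$ large, the sequence $\{x_k\}$ must be bounded. Extract a convergent subsequence $x_{k_j}\to x^*$. For any $\epsilon>0$ and $j$ large, $B(x_{k_j},\delta_{k_j})\subset B(x^*,2\epsilon)$, hence $\Pr[X\in B(x^*,2\epsilon)]\ge L - \delta_{k_j}$. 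Letting $j\to\infty$ and then $\epsilon\to 0^+$ via continuity of measure from above yields $\Pr[X=x^*]\ge L$. This establishes the upper bound and simultaneously shows the supremum is attained at $x^*$.

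The equivalence in the ``consequently'' clause is then immediate: $X$ has no atoms iff $\max_x \Pr[X=x]=0$ iff $\calL(X;0+)=0$. The only mildly delicate step is the boundedness argument for $\{x_k\}$, which requires $L>0$; everything else is standard measure-theoretic bookkeeping, so I don't anticipate any real obstacle.
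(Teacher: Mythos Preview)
Your proof is correct and follows essentially the same route as the paper's: both take near-optimizers $x_k$ of $\calL(X;\delta_k)$, use tightness of $P_X$ to confine them to a compact set, pass to a convergent subsequence, and then use continuity from above to conclude that the limit point carries mass at least $\calL(X;0+)$. Your write-up is slightly more explicit in justifying that the supremum over atoms is attained, but otherwise the arguments coincide.
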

\begin{proof}
Let $a \triangleq \lim_{\delta \to 0} \calL(X;\delta)$, which exists since $\delta \mapsto \calL(X;\delta)$ is increasing.
	Since $\calL(X;\delta) \geq \prob{X=x}$ for any $\delta>0$ and any $x$, it is sufficient to show that  $a \leq \max_{x \in \reals} \prob{X=x}$. Assume that $a>0$ for otherwise there is nothing to prove.
	By definition, for any $n$, there exists $x_n$ so that $\prob{X\in B(x_n,1/n)} \geq a - 1/n$. 	
	Let $T>0$ so that $\prob{|X|>T} \leq a/2$. Then $|x_n| \leq T$ for all sufficiently large $n$. By restricting to a subsequence, we can assume that $x_n$ converges to some $x$ in $[-T,T]$. By triangle inequality, $\prob{X \in B(x, |x_n-x|+1/n)} \geq \prob{X \in B(x_n, 1/n)} \geq a-1/n$. 
	By bounded convergence theorem, $\prob{X=x} \geq a$, completing the proof.
\end{proof}

\bibliographystyle{alpha}
\bibliography{IEEEabrv,references}

\end{document}